\titleformat*{\section}{\large\bfseries}
\titleformat*{\subsection}{\it}
\newtheorem{thm}{Theorem}
\newtheorem{lem}{Lemma}
\newtheorem{prp}{Proposition}
\newtheorem{algo}{Algorithm}
\newtheorem{remark}{Remark}[section]
\def\ep{{\varepsilon}}
\def\th{{\theta}}
\def\Ga{\Gamma}
\def\thh{{\widehat{\theta}}}
\def\ta{{\tau}}
\def\pit{{\tilde \pi}}
\def\al{{\alpha}}
\def\be{{\beta}}
\def\ga{{\gamma}}
\def\om{{\omega}}
\def\ka{{\kappa}}
\def\non{{\nonumber}}
\title{{\bf Shrinkage with Robustness:\\
Log-Adjusted Priors for Sparse Signals}}
\date{}
\author{}
\begin{document}

\maketitle
\doublespacing

\vspace{-1.5cm}
\begin{center}
{\large  Yasuyuki Hamura$^1$, Kaoru Irie$^2$ and Shonosuke Sugasawa$^3$}
\end{center}

\noindent
$^1$Graduate School of Economics, The University of Tokyo\\
$^2$Faculty of Economics, The University of Tokyo\\
$^3$Center for Spatial Information Science, The University of Tokyo

\vspace{5mm}
\begin{center}
{\bf \large Abstract}
\end{center}
We introduce a new class of distributions named log-adjusted shrinkage priors for the analysis of sparse signals, which extends the three parameter beta priors by multiplying an additional log-term to their densities. 
The proposed prior has density tails that are heavier than even those of the Cauchy distribution and realizes the tail-robustness of the Bayes estimator, while keeping the strong shrinkage effect on noises. 
We verify this property via the improved posterior mean squared errors in the tail. An integral representation with latent variables for the new density is available and enables fast and simple Gibbs samplers for the full posterior analysis. Our log-adjusted prior is significantly different from existing shrinkage priors with logarithms for allowing its further generalization by multiple log-terms in the density. 
The performance of the proposed priors is investigated through simulation studies and data analysis.

\bigskip\noindent
{\bf Key words}: Iterated logarithm; Markov Chain Monte Carlo; Mean squared error; Tail robustness; Three parameter beta prior; Horseshoe prior.

\newpage
\section{Introduction}

Developing new classes of continuous prior distributions that realize the shrinkage effect of variable-selection type on location parameters has been an important research top in the last few decades, especially in the context of the analysis of high-dimensional datasets to properly express one's prior belief on ``few large signals among noises''. As pointed out by \cite{carvalho2009handling,carvalho2010horseshoe}, we can express such belief explicitly via the parameterization of shrinkage effect in the Bayes estimator that shrinks the observed signals to zero or baseline. This parametrization opens the path to crafting the new class of continuous priors that mimic the discrete mixture for variable selection or the spike-and-slab priors \citep{ishwaran2005spike}, which is more desirable in the high-dimensional context than the existing shrinkage priors \citep[e.g.][]{strawderman1971proper,berger1980robust,park2008bayesian}. 
The desirable prior here should, of course, shrink the negligible noises toward zero, but also be robust to outlying large signals in the sense that such signals are kept unshrunk in the posterior analysis. 
The latter property is typically called tail-robustness \citep[e.g.][]{carvalho2010horseshoe}, and the aim of this research is define a new class of shrinkage priors with strong tail-robustness.

The aforementioned parametrization describes both shrinkage effect and tail-robustness implicitly assumed in the prior of interest. Suppose we observe $y_i \sim N(\theta _i , 1 )$ independently for $i=1,\ldots,n$ and the prior is given by $\theta _i \sim N(0,\tau u_i)$ (and $\tau = 1$, for simplicity) and $u_i\sim \pi (u_i)$. 
Then, the Bayes estimator of true signal $\theta _i$ is written as $(1-E[\kappa _i | y_i]) y_i$, where $\kappa _i = 1 / (1+u_i)$. It is this parameter, $\kappa _i$, that controls the amount of shrinkage in the Bayes estimator. In the presence of sparse signals, the standard choice of priors has been the beta distribution \citep{armagan2011generalized,perez2017scaled}, originated from the half-Cauchy distribution (\citealt{gelman2006prior}, or horseshoe prior; \citealt{carvalho2009handling,carvalho2010horseshoe}) given by $\pi (\kappa_i ) \propto \kappa_i^{b-1} (1-\kappa_i )^{a-1}$, $\kappa_i \in (0,1)$, with positive $a$ and $b$. The appropriate modeling of shrinkage and robustness is then translated into the choice of extremely small shape parameters $(a,b)$. This preference on the choice of hyperparameters is, however, against the finding of \cite{bai2019large}; to guarantee the desirable posterior concentration for both small and large signals, $a$ can be extremely small ($a=1/n$) but $b$ must be sufficiently large ($b\ge 1/2$), which clarifies the limitation of the class of beta distributions. 

In this research, we consider the extension of beta-type shrinkage priors to strengthen the prior tail-robustness. Specifically, we propose the following modified version of the beta prior: 
\begin{equation}\label{EHS}
\pi (\kappa_i ) \propto \kappa_i^{b-1} (1-\kappa_i )^{a-1} \left( 1 - \log \kappa_i  \right) ^{-(1+\gamma )}, 
\end{equation}
where $\gamma > 0$ is a newly introduced hyperparameter. The use of logarithm in the density slightly ``slows down'' the divergence of the density as $\kappa _i \downarrow 0$ and, in fact, makes the density kernel above integrable even if $b=0$, as shown in Theorem~\ref{thm:prop}. This distribution allows the stronger tail-robustness than the beta prior by setting $b = 0$, while remaining in the class of proper priors. 

The use of logarithm term in the density function to define the new class of distributions has motivated many research on posterior inference. They include the analysis of ultra-sparse signals \citep{bhadra2017horseshoe+}, robust regression \citep{Gagnon2019}, and admissibility \citep{maruyama2019admissible}. The shrinkage with robustness-- our research goal-- has also been considered in \cite{womack2019heavy} as the heavy-tailed extension of the horseshoe prior. A similar log-adjusted method was employed in \cite{hamura2019global} for the analysis of high-dimensional counts. 
However, the proposed prior has a different functional form from those listed above, and the superiority of the proposed prior to existing ones will be demonstrated via improvement of the mean squared error for large $y_i$, as summarized in Theorem~\ref{thm:MSE1}. 

Another advantage of the prior of our interest is the potential of further generalization, by which one may modify the proper prior ``as robust as possible''. Although the prior in (\ref{EHS}) becomes improper with $\ga = 0$, we can multiply another log-term as 
\begin{equation*}
\pi (\kappa_i ) \propto \kappa_i ^{b-1} (1-\kappa_i )^{a-1} \left( 1 - \log \kappa_i \right) ^{-1} \left\{ 1 + \log (1 - \log \kappa_i ) \right\} ^{-(1+\gamma )},
\end{equation*}
which is proper again even if $b=0$ as long as $\gamma>0$. 
Notably, we can repeatedly iterate this process of extension; if $\gamma = 0$ in the above equation and the density becomes improper, then the reciprocal of another log-term, $1+\log \{ 1+\log ( 1-\log \kappa_i ) \}$, can be multiplied to the density to regain the proper prior. It is expected, and verified later in Theorem~\ref{thm:MSE2}, that such extension provides the stronger tail-robustness and makes the choice of $\ga$ less sensitive to the posterior analysis.

The Bayes estimator under the proposed priors has no closed form, even if global scale $\tau$ is fixed, due to the intractable normalizing constant. Yet, the estimator can be evaluated fast by simulation. It is shown that the prior density admits the integral representation, or the augmentation by latent variables that follow gamma-shape Markov processes, by which the full conditional posteriors of those parameters and latent variables become normal, (inverse) gamma or generalized inverse Gaussian distributions. Sampling from those distributions is trivial, and the full posterior analysis becomes available by the simple but efficient Gibbs sampler.

The rest of this paper is organized as follows. In Section \ref{sec:EHS}, we define the log-adjusted shrinkage prior and its extension, and provide the theoretical properties, the improvement of the mean squared errors of Bayes estimators, and the Gibbs sampler by augmentation. 
Simulation studies and data analysis follow in Section \ref{sec:numerical_study}, with the extensive comparative analysis with the existing shrinkage priors. 
We conclude our paper in Section \ref{sec:discussion} with the further discussion on the possibility of justifying the log-adjusted prior as the continuous alternative of variable selection that approximates spike-and-slab priors. 

All proofs and technical details are given in the Appendix.

\section{Log-Adjusted Shrinkage Priors}
\label{sec:EHS}

\subsection{The proposed prior and its properties}
Suppose we observe an $n$-dimensional vector $(y_1,\ldots,y_n)$, that $y_i|\theta_i\sim N(\theta_i,1)$ independently for $i=1,\ldots,n$. To estimate signals $(\theta_1,\ldots,\theta_n)$ that are potentially sparse, we adopt locally adaptive shrinkage priors known as global-local shrinkage priors \citep{polson2012local,polson2012half,bhadra2016default} given by 
\begin{equation}\label{GLSP}
\theta_i|\tau,u_i\sim N(0,\tau u_i) \ \ \ \mathrm{and} \ \ \  u_i\sim\pi(u_i), \ \ \ \mathrm{for} \ i=1,\ldots,n,
\end{equation}
where both $\tau$ and $(u_1,\ldots ,u_n)$ are all positive. Here, $\tau$ is the global shrinkage parameter that shrinks all $\theta_i$'s toward zero uniformly, while $u_i$ is the local scale parameters and customizes the shrinkage effect for each individual $i$. For simplicity, we assume $\ta = 1$ to focus our theoretical development on the priors for local scale parameters. 
We propose the following modified version of the scaled beta distribution: 
\begin{equation}\label{EHS-u}
\pi (u_i)=C(a,b,\gamma)^{-1}u_i^{a-1} (1+u_i )^{-(a+b)} \left\{ 1 + \log (1 + u_i)  \right\} ^{-(1+\gamma )},
\end{equation}
where $C(a,b,\gamma)$ is a normalizing constant. Note that the class of distributions defined by density (\ref{EHS-u}) includes the scaled beta distributions \citep{armagan2011generalized} as the density of $\gamma = -1$ and positive $a$ and $b$. The hyperparameters, $(a,b,\gamma )$, determine the functional form of the density around the origin and in the tails. Shape parameters $a$ and $b$ control shrinkage effect and tail robustness for Bayes estimators, respectively, and both parameters should be set to small values in order to achieve the desirable shrinkage and robustness properties. 
Specifically, we set $a=1/n$, following \cite{bai2019large}, to realize the strong shrinkage effect on noises and set $b=0$ to attain the strong tail robustness.
Note again that setting $b=0$ in the original scaled beta distribution leads to an improper prior, thereby it cannot be adopted as shrinkage priors in practice. 
The new parameter $\gamma$ also affects the tail behavior of the density as $b$ does, but it would have less impact on posterior analysis.
We may either fix $\gamma$ subjectively to a certain value, such as $\gamma=1$, or take the fully Bayesian approach by considering the prior for $\gamma$ as we discuss in the subsequent section. 
The new priors for $\theta_i$ under (\ref{GLSP}) with the log-adjusted scaled beta distribution (\ref{EHS-u}) is named {\it log-adjusted shrinkage priors}. 
In what follows, we demonstrate properties of the proposed prior with general hyperparameters, $(a, b, \gamma)$, but the priors of our interest and recommendation are those with $a=1/n$ and $b=0$.

We first provide important properties of the proposed shrinkage prior for $\theta_i$ in the theorem below.

\begin{thm}\label{thm:prop}
The log-adjusted shrinkage prior for $\theta_i$, $\pi (\theta _i)$, satisfies the following properties.
\begin{itemize}
\item[1.] $\pi (\theta _i)$ is proper if $a>0$, $b\ge 0$ and $\gamma >0$. 

\item[2.]
$\lim_{|\theta_i|\to 0}\pi (\theta_i)=\infty$ for $a \le 1 / 2$. 

\item[3.]
$\pi (\theta_i)\propto |\theta_i|^{- 2b -1}L( | \theta_i |)$ under $|\theta_i|\to\infty$, where $L( \cdot )$ is a slowly varying function satisfying $\lim_{M\to\infty}L(Mu)/L(M)=1$ for all $u>0$.

\end{itemize}
\end{thm}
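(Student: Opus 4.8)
The plan is to work throughout with the marginal density obtained by integrating the normal scale mixture in (\ref{GLSP}) against the mixing density (\ref{EHS-u}), namely $\pi(\theta_i)=C(a,b,\gamma)^{-1}\int_0^\infty (2\pi u)^{-1/2}e^{-\theta_i^2/(2u)}\,u^{a-1}(1+u)^{-(a+b)}\{1+\log(1+u)\}^{-(1+\gamma)}\,du$, and to reduce each of the three claims to the asymptotics of a single one-dimensional integral. For properness (Part 1), since the conditional law of $\theta_i$ given $u_i$ is a proper normal, Tonelli's theorem gives $\int\pi(\theta_i)\,d\theta_i=\int_0^\infty\pi(u)\,du$, so it suffices to show $C(a,b,\gamma)<\infty$. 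I would split the defining integral at the origin and at infinity: near $u=0$ the integrand is of order $u^{a-1}$, integrable iff $a>0$, while as $u\to\infty$ it is of order $u^{-(b+1)}\{\log(1+u)\}^{-(1+\gamma)}$. When $b>0$ the polynomial factor alone secures convergence, whereas in the critical case $b=0$ the substitution $t=\log(1+u)$ turns the tail into $\int^\infty t^{-(1+\gamma)}\,dt$, finite precisely when $\gamma>0$. This both proves the claim and pinpoints why $\gamma>0$ is exactly the condition that keeps the prior proper at $b=0$.

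For the behavior at the origin (Part 2), I would note that $e^{-\theta_i^2/(2u)}\uparrow 1$ monotonically as $|\theta_i|\downarrow 0$, so monotone convergence yields $\lim_{|\theta_i|\to 0}\pi(\theta_i)=C(a,b,\gamma)^{-1}\int_0^\infty (2\pi u)^{-1/2}u^{a-1}(1+u)^{-(a+b)}\{1+\log(1+u)\}^{-(1+\gamma)}\,du$. The new integrand is of order $u^{a-3/2}$ near $u=0$, whose integral diverges exactly when $a-3/2\le -1$, i.e. $a\le 1/2$; hence the limit is $+\infty$ in that range.

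For the tail (Part 3), I would substitute $u=\theta_i^2 s$, which concentrates the relevant mass at scale $u\asymp\theta_i^2$, and then factor $|\theta_i|^{-(2b+1)}$ and $(\log\theta_i^2)^{-(1+\gamma)}$ out of the integral. Writing $1+\theta_i^2 s=\theta_i^2(\theta_i^{-2}+s)$ and $\log(1+\theta_i^2 s)=\log\theta_i^2+\log(\theta_i^{-2}+s)$ exhibits these factors cleanly, and the residual integrand converges pointwise to $(2\pi)^{-1/2}s^{-(b+3/2)}e^{-1/(2s)}$, whose integral over $(0,\infty)$ is a finite positive constant (a gamma integral, convergent because $b\ge 0$). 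Writing $(\log\theta_i^2)^{-(1+\gamma)}=2^{-(1+\gamma)}(\log|\theta_i|)^{-(1+\gamma)}$ then identifies the slowly varying factor $L(|\theta_i|)=(\log|\theta_i|)^{-(1+\gamma)}$, for which $L(Mu)/L(M)=(1+\log u/\log M)^{-(1+\gamma)}\to 1$ is immediate.

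The hard part will be justifying the interchange of limit and integral in the residual: the normalized logarithmic factor $\{1+\log(1+\theta_i^2 s)\}^{-(1+\gamma)}/(\log\theta_i^2)^{-(1+\gamma)}$ is not uniformly bounded near $s=0$, blowing up like $(\log\theta_i^2)^{1+\gamma}$ on the region where $\theta_i^2 s$ stays bounded, so naive dominated convergence fails. My plan is to split the $s$-range into $(0,|\theta_i|^{-1})$, $(|\theta_i|^{-1},\epsilon)$ and $(\epsilon,\infty)$. On the innermost piece the factor $e^{-1/(2s)}$ produces a super-exponentially small (gamma-tail) contribution that overrides the $(\log\theta_i^2)^{1+\gamma}$ growth; on the middle piece $\theta_i^2 s\ge|\theta_i|$ forces $1+\log(1+\theta_i^2 s)\ge\tfrac12\log\theta_i^2$ for large $|\theta_i|$, so the log ratio is bounded below and the piece is dominated by $2^{1+\gamma}(2\pi)^{-1/2}s^{-(b+3/2)}e^{-1/(2s)}$, making it uniformly small as $\epsilon\downarrow 0$; and on the outer piece genuine dominated convergence applies with dominating function proportional to $s^{-(b+3/2)}$. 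Sending $\theta_i\to\infty$ and then $\epsilon\downarrow 0$ delivers the constant limit, and hence the stated regular variation of $\pi(\theta_i)$ with index $-(2b+1)$.
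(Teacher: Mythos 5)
Your proposal is correct, and for Parts 1 and 2 it runs essentially parallel to the paper's argument (the paper, which proves Theorem 1 as the $L=1$ case of Theorem 3, also reduces properness to integrability of the mixing density--- via an antiderivative of the log factor, which is what your substitution $t=\log(1+u)$ amounts to---and also proves the spike by monotone convergence after a lower bound). The genuine difference is in Part 3. The paper absorbs the \emph{entire} factor $S(u)=\{u/(1+u)\}^{a+b}\{1+\log(1+u)\}^{-(1+\gamma)}$ into the slowly varying part, writes $\pi(u)=C^{-1}u^{-b-1}S(u)$, substitutes $u=(\theta^2/2)v$, and applies a single dominated-convergence argument to $\int_0^\infty v^{-1/2-b-1}e^{-1/v}\,S((\theta^2/2)v)/S(\theta^2/2)\,dv$, using the $\theta$-free envelope $S((\theta^2/2)v)/S(\theta^2/2)\le 2^{a+b}\max\{1,(2/v)^{1+\gamma}\}$, which is integrable against $v^{-1/2-b-1}e^{-1/v}$ because the essential singularity kills any polynomial blow-up at $v=0$. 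You instead normalize only by the leading factor $(\log\theta_i^2)^{-(1+\gamma)}$ and compensate with a three-zone splitting; this works---your gamma-tail bound $\int_0^{1/|\theta_i|}s^{-b-3/2}e^{-1/(2s)}ds=O(e^{-|\theta_i|/4})$ does beat $(\log\theta_i^2)^{1+\gamma}$ on the inner zone, and the middle and outer zones are handled as you describe---but your diagnosis that ``naive dominated convergence fails'' is overstated. Dominated convergence requires an integrable envelope, not a constant one, and the inequality $\log\theta_i^2/\{1+\log(1+\theta_i^2 s)\}\le 2/s$ for $s\le 1$, $\theta_i^2\ge1$ (obtained exactly as in the paper, by writing the ratio of log terms as $\exp\{\int_s^1\cdots dt\}$ and bounding the integrand by $1/(1/\theta_i^2+t)$) supplies the $\theta$-free dominating function $(2/s)^{1+\gamma}s^{-b-3/2}e^{-1/(2s)}$ in one stroke. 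So the paper's route reaches the limit in a single application of dominated convergence, while yours trades that observation for a longer but self-contained splitting; a small advantage of your version is that it exhibits the slowly varying function explicitly as $L(|\theta_i|)\propto(\log|\theta_i|)^{-(1+\gamma)}$, whereas the paper works with $S(\theta_i^2/2)$ and must separately certify its slow variation through the lemma asserting that $uL'(u)/L(u)\to0$ implies slow variation.
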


\noindent 
It is notable that the prior is proper even if $b=0$ from the first property; this is obviously due to the additional log-term in (\ref{EHS-u}).
The second property is the same as that of the original beta prior, which indicates that the proposed prior has the density with the spike around the origin and holds strong shrinkage property.
It also means that the additional log-term does not change the shrinkage property of the original beta-type prior.
From the third property, our proposal of setting $b=0$ results in the extremely heavy tailed prior distribution for $\theta_i$, whose tail is heavier than even the Cauchy distribution. 
Such heavy-tailed properties are essential for strong-tail robustness, as shown in Theorem \ref{thm:MSE1}.

Figure~\ref{fig:dens} shows the examples of the log-adjusted beta distribution given in (\ref{EHS}), in the scale of $\ka _i = 1/(1+u_i)$, for different choices of hyperparameter $\ga$. When compared with beta density $Be(1/2,1/2)$, which is the half-Cauchy distribution in the scale of $u_i$ and realizes the horseshoe prior, the log-adjusted shrinkage densities have steeper spike as $\ka _i \to 0$, reflecting its tail property introduced by the additional log-term. The shrinkage effect is also affected by this additional term in the density, but the densities with moderate values of $\ga$, such as $\ga = 0.5$ or $\ga =1$, show the similar speed of divergence toward $\ka _i$ as the beta density does. These observations imply that, with the appropriate choice of hyperparaemters, the log-adjusted shrinkage prior can introduce the strong tail-robustness without losing the horseshoe-type shrinkage effect. 

\begin{figure}[!htbp]
	\centering
	\includegraphics[width=13cm]{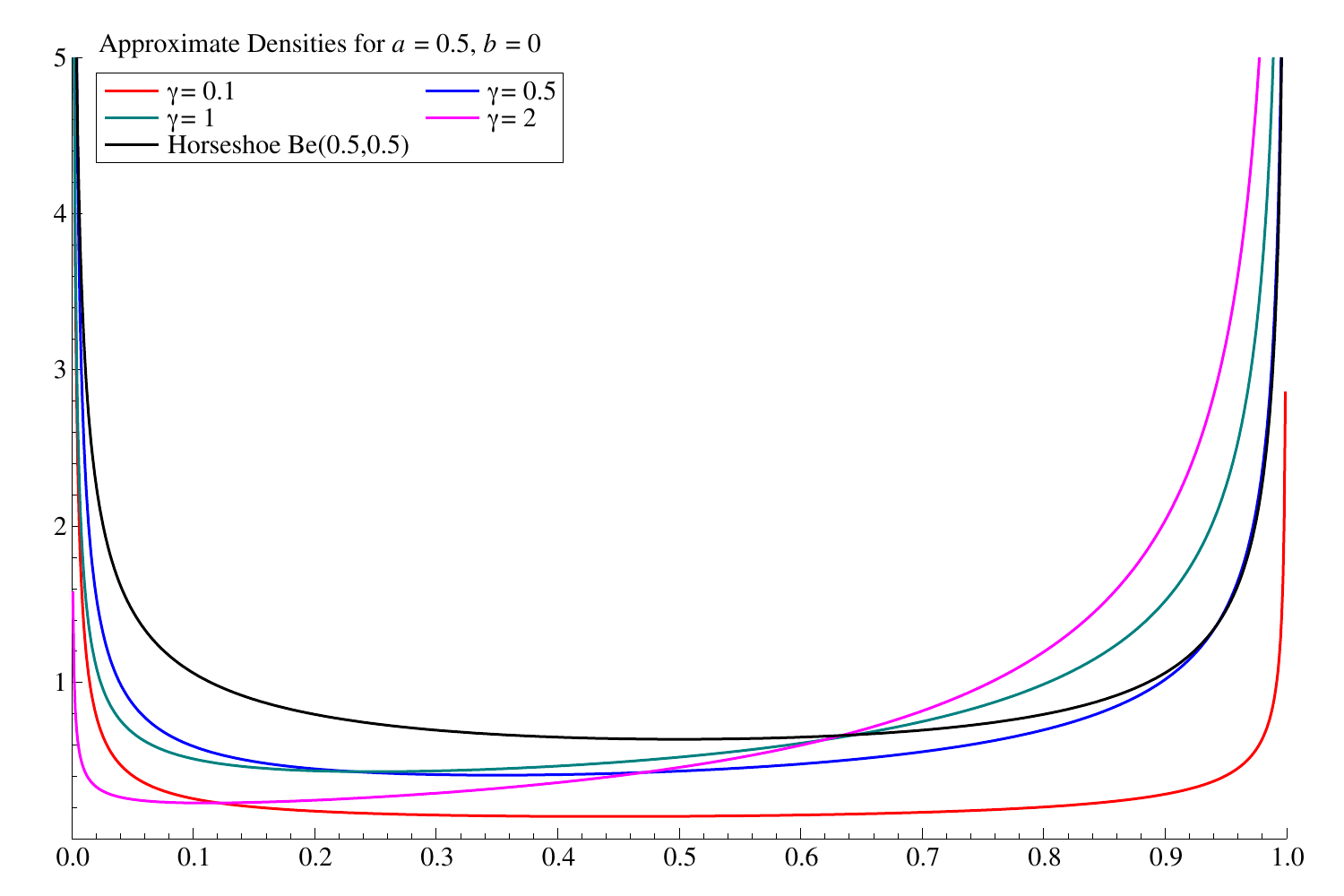} 
	\caption{
		The prior density for the shrinkage factor $\kappa_i$ given in (\ref{EHS}) under the log-adjusted shrinkage prior with $a=1/2$ and $b=0$. The four densities represent the cases of $\ga = 0.1$ (red), $0.5$ (blue), $1$ (green) and $2$ (pink). 
		The density of beta distribution $Be(1/2,1/2)$ (black) is equivalent to the half-Cauchy prior in the scale of $u_i$ and realizes the horseshoe prior. The log-adjusted shrinkage priors have the steep increase toward $\ka _i=0$ for tail-robustness, while maintaining its spike around $\ka _i = 1$ for strong shrinkage if we choose moderate values of $\ga$, such as $\ga = 0.5$ or $\ga =1$. 
		\label{fig:dens}
	}
\end{figure}

In order to clarify the importance of setting $b=0$, we next examine the posterior tail-robustness under the proposed prior by computing the posterior mean squared error (MSE) for large $y_i$. Denote the posterior MSE under a prior $\pi ( \theta_i )$ by ${\rm{MSE}}_{\pi } ( \th _i | y_i ) = E_{\pi } [ ( \th _i - y_i )^2 | y_i ]$. We evaluate the MSE of the proposed class of priors in the following theorem based on the representation of posterior MSE by the marginal likelihood (e.g., see \citealt{polson1991representation}).

\begin{thm}
\label{thm:MSE1}
Under the log-adjusted shrinkage priors for $\theta_i$ with hyperparameters $a>0, b\geq 0$ and $\gamma>0$, and under the beta-type prior ($a>0$, $b>0$ and $\ga = -1$), it holds that 
\begin{align}\label{MSE} 
{\rm{MSE}}_{\pi } ( \th _i | y_i )=1+\frac{2}{y_i^2}(1 + b) (1 + 2 b)+o\left(\frac{1}{y_i^2}\right), 
\end{align}
where ${y_i}^2 o(1 / {y_i}^2 ) \to 0$ as $|y_i|\to\infty$. 
\end{thm}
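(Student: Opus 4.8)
The plan is to reduce the claim to a Watson's-lemma-type asymptotic analysis of a one-dimensional integral, exploiting the scale-mixture form of the prior. Write $s=y_i^2$ and $\kappa_i=1/(1+u_i)$, so that $\theta_i\mid u_i\sim N(0,u_i)$ becomes $y_i\mid \kappa_i\sim N(0,1/\kappa_i)$ and $\theta_i\mid y_i,\kappa_i\sim N\bigl((1-\kappa_i)y_i,\,1-\kappa_i\bigr)$. By the tower property, $E[(\theta_i-y_i)^2\mid y_i,\kappa_i]=(1-\kappa_i)+\kappa_i^2 y_i^2$, hence
\[
{\rm MSE}_\pi(\theta_i\mid y_i)=1-E[\kappa_i\mid y_i]+y_i^2\,E[\kappa_i^2\mid y_i],
\]
which is the same as the marginal-likelihood representation ${\rm MSE}_\pi(\theta_i\mid y_i)=1+m''(y_i)/m(y_i)$ used by \cite{polson1991representation}. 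Combining the Gaussian likelihood with the prior density in the $\kappa_i$-scale, (\ref{EHS}), the posterior of $\kappa_i$ given $y_i$ is proportional to $\kappa^{\,b-1/2}e^{-\kappa s/2}(1-\kappa)^{a-1}(1-\log\kappa)^{-(1+\gamma)}$ on $(0,1)$ (for the beta-type prior the last factor is absent, i.e.\ $\gamma=-1$). Writing $h(\kappa)=(1-\kappa)^{a-1}(1-\log\kappa)^{-(1+\gamma)}$ and $I_c(s)=\int_0^1\kappa^{\,c}e^{-\kappa s/2}h(\kappa)\,d\kappa$, we have $E[\kappa_i^{\,j}\mid y_i]=I_{b-1/2+j}(s)/I_{b-1/2}(s)$, so the theorem becomes a statement about the two ratios $I_{b+1/2}/I_{b-1/2}$ and $I_{b+3/2}/I_{b-1/2}$ as $s\to\infty$.

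Next I would rescale out the exponential: the substitution $\kappa=2t/s$ gives $I_c(s)=(2/s)^{c+1}\int_0^{s/2}t^{\,c}e^{-t}h(2t/s)\,dt$, and the powers of $s$ combine so that
\[
{\rm MSE}_\pi(\theta_i\mid y_i)=1+\frac{2}{y_i^2}\bigl(2R_2(s)-R_1(s)\bigr),
\]
where
\[
R_j(s)=\frac{\int_0^{s/2}t^{\,b-1/2+j}e^{-t}\,h(2t/s)\,dt}{\int_0^{s/2}t^{\,b-1/2}e^{-t}\,h(2t/s)\,dt}.
\]
It therefore suffices to show $R_j(s)\to\Gamma(b+\tfrac12+j)/\Gamma(b+\tfrac12)$ as $s\to\infty$, and that the convergence is of the form $1+o(1)$; since an $O(1/\log s)$ or $O(1/s)$ relative error in $R_j$ feeds only $o(1/y_i^2)$ into the MSE, this also accounts for the error term in the statement.

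The core of the argument is this last limit. For each fixed $t>0$ the factor $h(2t/s)$ tends to $0$, but after pulling out the common slowly varying prefactor $(1+\log s)^{-(1+\gamma)}$ one has $h(2t/s)=(1+\log s)^{-(1+\gamma)}\,\widetilde h_s(t)$ with $\widetilde h_s(t)=(1-2t/s)^{a-1}\bigl(1-\log(2t)/(1+\log s)\bigr)^{-(1+\gamma)}\to 1$ pointwise, and the prefactor cancels in $R_j(s)$. A dominated-convergence argument then yields $\int_0^{s/2}t^{\,c}e^{-t}\widetilde h_s(t)\,dt\to\Gamma(c+1)$: on $0<t<\sqrt s$ the factor $\widetilde h_s(t)$ is bounded uniformly in $s$, while the contribution of $\sqrt s<t<s/2$ is exponentially small, even allowing for the at-most-polynomial (in $s$) blow-up of $(1-2t/s)^{a-1}$ near $\kappa=1$ when $0<a<1$ and the $(1+\log s)^{1+\gamma}$ growth of the log-factor there. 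I expect this uniform control near $\kappa=1$ (equivalently $u_i\downarrow 0$), together with the bookkeeping of the slowly varying log-term, to be the only real obstacle; note also that the leading $t$-independent behavior $h(2t/s)\approx(1+\log s)^{-(1+\gamma)}$ near $\kappa=0$ is why the shape parameter $a$ (entering only through $(1-2t/s)^{a-1}=1+O(t/s)$) does not appear in the $1/y_i^2$ coefficient.

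Finally, substituting the limits and using $\Gamma(b+\tfrac32)=(b+\tfrac12)\Gamma(b+\tfrac12)$, $\Gamma(b+\tfrac52)=(b+\tfrac32)(b+\tfrac12)\Gamma(b+\tfrac12)$,
\[
2R_2(s)-R_1(s)\;\longrightarrow\;\frac{2\Gamma(b+\tfrac52)-\Gamma(b+\tfrac32)}{\Gamma(b+\tfrac12)}=(b+\tfrac12)\bigl(2(b+\tfrac32)-1\bigr)=(1+b)(1+2b),
\]
so that ${\rm MSE}_\pi(\theta_i\mid y_i)=1+\tfrac{2}{y_i^2}(1+b)(1+2b)+o(1/y_i^2)$, as claimed. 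The beta-type case is handled identically with the $(1-\log\kappa)^{-(1+\gamma)}$ factor (and the associated log bookkeeping) removed.
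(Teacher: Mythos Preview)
Your argument is correct and follows essentially the same strategy as the paper's proof: express the MSE via the marginal-likelihood identity, reduce it to a ratio of integrals, rescale to isolate the slowly varying part of the prior, and apply dominated convergence to obtain the limits $\Gamma(b+k)$, which combine to $(1+b)(1+2b)$. The only difference is the parameterization: the paper works in the $u$-scale with $u=(y^2/2)v$ and integrals $I(y,k)=\int_0^\infty\{\tfrac{y^2/2}{1+(y^2/2)v}\}^{k} v^{-b-1}\exp\{-\tfrac{y^2/2}{1+(y^2/2)v}\}S(\tfrac{y^2}{2}v)\,dv$, while your $\kappa$-scale substitution $\kappa=2t/s$ produces the cleaner Laplace form $\int_0^{s/2}t^{c}e^{-t}h(2t/s)\,dt$, after which both routes cancel the same slowly varying factor and arrive at $\{2\Gamma(b+5/2)-\Gamma(b+3/2)\}/\Gamma(b+1/2)$.
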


\noindent 
We first note that the above approximation formula of MSE is independent of $\ga$, thereby the same formula holds for the original beta-type shrinkage prior for $\theta_i$.
Moreover, Theorem \ref{thm:MSE1} shows that the leading term of the posterior MSE for large signals is increasing in $b$, which clearly suggests that the best choice is $b=0$. 
The proper log-adjusted shrinkage prior can attain the ideal MSE by setting $b=0$, outperforming in the MSE the proper beta-type shrinkage prior for which we always have to set $b>0$.

%
The posterior MSE for large $y_i$ has also been calculated for other shrinkage priors. Theorem 7 of \cite{bhadra2017horseshoe+} provided the posterior MSE of the horseshoe$+$ prior $\pi _{{\rm{HS}}+}$ as
\begin{equation*}
{\rm{MSE}}_{\pi _{{\rm{HS}}+}} ( \th _i | y_i ) = 1 + 3 \left( \frac{2}{ {y_i}^2 } \right) + o\left( \frac{1}{ {y_i}^2 } \right), 
\end{equation*} 
which is, as $| y_i | \to \infty $, inevitably larger than the posterior MSE of our proposed prior in (\ref{MSE}) with $b = 0$. 

\subsection{Posterior computation}
Although the Bayes estimator of $\theta _i$ for the log-adjusted shrinkage prior is not analytically available, there is an efficient yet simple Markov chain Monte Carlo algorithm for posterior computation. 
The full conditional posteriors of parameters, $\theta _i$'s, $u_i$'s and $\tau$, become well-known distributions after the appropriate augmentation by latent variables described below. The conditional posterior density of hyperparameter $\ga$ is complex due to the intractable normalizing constant $C(a,b,\ga )$, but the sampling from its distribution is feasible by the accept-reject algorithm. The rest hyperparameters are fixed as $a=1/n$ and $b=0$. 

The prior density of $u_i$ in (\ref{EHS-u}) has the following augmented expression: 
\begin{equation}\label{Aug}
\pi(u_i; \gamma)
= \frac{1}{C(a,b,\ga )}
\int _0^{\infty}\!\!\!\int _0^{\infty} u_i^{a-1}\frac{v_i^{\gamma}e^{-v_i}}{\Gamma(1+\gamma)}\frac{w_i^{v_i+a+b-1}e^{-w_i(1+u_i)}}{\Gamma(v_i+a+b)}dw_idv_i, 
\end{equation}
where $w_i$ and $v_i$ are latent variables for data augmentation. 
Given $\ga$, the augmented posterior distribution is proportional to 
$$
\pi(\tau)\prod_{i=1}^n {\rm{N}} ( y_i | \th _i , 1) {\rm{N}} ( \th _i | 0, \ta u_i ) \frac{1}{C(a,b,\ga )} u_i^{a-1}v_i^{\gamma}e^{-v_i}\frac{w_i^{v_i+a+b-1}e^{-w_i(1+u_i)}}{\Gamma(v_i+a+b)},
$$
where $\pi(\tau)$ denotes a prior distribution of $\tau$.
We assign an inverse-gamma prior for $\tau$ and set $\pi ( \ta ) = IG(\ta | c_{0}^{\ta } , d_{0}^{\ta } )$, which leads to conditional conjugacy. 
It is immediate from the expression above that all the full conditional distributions are of normal or (inverse) gamma distributions, so that we can efficiently carry out Gibbs sampling by generating posterior samples from those distributions. The procedure of Gibbs sampler is summarized as follows:

\begin{algo}[Gibbs sampling algorithm]\label{algo:gibbs}
Suppose $\ga$ is fixed. The Gibbs sampler algorithm, or the list of the full conditional distributions of the local and global parameters under the log-adjusted shrinkage prior, is summarized as follows: 
\begin{itemize}
\item
Generate $\ta $ from ${\rm{IG}} (n / 2 + c_{0}^{\ta } , \sum_{i = 1}^{n} {\th _i}^2 / (2 u_i ) + d_{0}^{\ta } )$. 

\item
Generate $\th_i$ from ${\rm{N}} ( y_i / \{ 1 + 1 / ( \ta u_i ) \} , 1 / \{ 1 + 1 / ( \ta u_i ) \} )$ for $i = 1, \dots , n$. 

\item
Generate $u_i$ from the generalized inverse Gaussian full conditional distribution $p( u_i | \th _i , w_i , \ta ) \propto {u_i}^{- 1 / 2 + a - 1} \exp [- \{ 2 w_i u_i + ( {\th _i}^2 / \ta ) / u_i \} / 2]$ for $i = 1, \dots , n$. 

\item Generate $(v_i,w_i)$ by the following two steps: 

\begin{itemize}
\item
Generate $v_i$ from the conditional distribution marginalized over $w_i$, namely $p(v_i|\theta _i, \tau , y_i)$, which is ${\rm Ga}(1+\gamma,1+\log(1+u_i))$, for $i=1,\ldots, n$. 

\item
Generate $w_i$ from the full conditional distribution, $p(w_i|v_i,\theta _i, \tau , y_i)$, which is ${\rm Ga}(v_i+a+b,1+u_i)$ for $i=1,\ldots,n$.	
\end{itemize}

\end{itemize}
\end{algo}

%
We next consider incorporating the estimation of $\gamma$ into the algorithm above. Let $C(\gamma)= C(a,b,\ga )$ be the normalizing constant of $\pi (u_i)$ in (\ref{EHS-u}), which is defined by 
\begin{equation}\label{NC}
C(\gamma) = \int_{0}^{\infty } {u^{a - 1} (1 + u)^{-(a + b)} \over \{ 1 + \log (1 + u) \} ^{1 + \ga }} du = \int_0^1 {(1-\kappa)^{a-1}\kappa^{b-1} \over (1-\log\kappa)^{1+\gamma }} d\kappa \text{,} 
\end{equation}
where $\kappa = 1/(1+u)$, $a=1/n$ and $b=0$. 
Due to this intractable normalizing constant in the prior, the direct sampling from the full conditional of $\gamma$ is challenging and even infeasible. We circumvent this problem by constructing upper and lower bounds for the normalizing constant, which allows for the independent Metropolis-Hastings algorithm by providing the bounds of the acceptance probability with arbitrary accuracy. This approach is similar to the alternating series method \citep{devroye1981series,devroye2009exact} and widely used in, for example, the sampling from the Polya-gamma distribution \citep{polson2013bayesian}. 
Our sampling procedures are briefly sketched in the following. 
Details of the bounds of the acceptance probability, $\overline{w}$ and $\underline{w}$, and some remarks about the sampling procedures are provided in Appendix \ref{app:gamma}. 

We use the gamma prior $Ga(a_0^{\gamma },b_0^{\gamma })$ for $\ga $. The proposal distribution is $Ga(a_1^{\gamma},b_1^{\gamma})$, whose parameters are given by
\begin{equation*}
 a_1^{\gamma} = a_0^{\gamma} + n a \ \ \ \mathrm{and} \ \ \ b_1^{\gamma} = b_0^{\gamma} + \sum _{i=1}^n \log (1+\log (1+u_i) ). 
\end{equation*}
Denote the current state at an iteration of the MCMC algorithm by $\gamma$, and the candidate drawn from the proposal by $\gamma '$. 
The acceptance probability $A(\gamma \to \gamma ') $ is bounded below and above by $\underline{w}$ and $\overline{w}$. 
Both are functions of $(a,\gamma , \gamma ' , K)$ and converge to $A(\gamma \to \gamma ')$ as $K\to \infty$, where $K$ controls the precision of the approximation and can be set as large as necessary. 
The procedure of the independent Metropolis-Hastings sampling is summarized as follows. 

\begin{algo}[Sampling from the full conditional of $\gamma$] 
\label{algo:gamma}
The steps for generating $\gamma$ from its full conditional distribution is summarized as follows: given the current sample $\gamma$,

\begin{enumerate}
	\item Generate $\gamma '$ from the proposal $Ga(a_1^{\gamma},b_1^{\gamma})$. 
		
	\item Generate $U \sim U(0,1)$.
	
	\item Given $K$, evaluate $\underline{w}$ and $\overline{w}$. Then, 
	
	\begin{itemize}
	\item If $U<\underline{w}$, accept $\gamma '$ as the sample of this iteration. 
	
	\item If $U>\overline{w}$, reject $\gamma '$ and keep $\gamma$ as the sample of this iteration. 
	
	\item Otherwise ($\underline{w} < U < \overline{w}$), increase $K$ and redo step 3. 
	\end{itemize}

\end{enumerate}
\end{algo}


\subsection{Generalization using iterated logarithm} 
\label{sec:extensions} 

Following the motivation given in the introduction, the log-adjusted shrinkage prior is further extended to the more general class of distributions. As the (scaled) beta distributions is extended to the log-adjusted version by the multiplicative log-term, this generalization is naturally realized by the use of finitely iterated logarithmic functions. 

For $z \ge 1$, let $f_1(z) \equiv f(z) \equiv 1+\log (z)$. 
Then, the iterated logarithm is defined inductively by $f_{L+1}(z) \equiv f( f_{L}(z) )$ for $L = 1, 2, \dotsc $. 
Define the extended version of the modified scaled beta priors with parameter $\ga > 0$ by 
\begin{equation} \label{ILAS}
\pi(u_i; \ga, L ) \propto  \frac{u_i^{a - 1}}{(1 + u_i)^{a+b}} \left\{ \prod _{k=1}^{L-1} \frac{1}{f_k (1+u_i)} \right\} \frac{1}{f_L (1+u_i)^{1 + \ga }} \text{,} 
\end{equation}
where $a>0$ and $b\ge 0$ are constant (We again recommend $a=1/n$ and $b=0$). The corresponding prior for the shrinkage factor $\kappa_i=(1+u_i)^{-1}$ is given by 
\begin{equation*} 
\pi(\kappa_i; \ga, L ) \propto \kappa_i^{b-1}(1-\kappa_i)^{a-1} \left\{ \prod _{k=1}^{L-1} \frac{1}{f_k (1/\kappa_i)} \right\} \frac{1}{f_L (1/\kappa_i)^{1 + \ga }} \text{.} 
\end{equation*}
The prior for $\th _i$ induced by this distribution as the scale mixture of normal is named {\it iteratively log-adjusted shrinkage (ILAS) prior}. 

When $L = 1$, this prior is precisely the original log-adjusted shrinkage prior discussed in the previous subsections. We require that $b=0$ for the improvement from the beta-type prior, but the priors are still proper if only $\ga >0$, as shown in the following theorem. 
As order $L$ increases, the tails of the density for $\th_i$ becomes heavier, while remaining in the class of proper priors, from which we expect the stronger tail-robustness of the Bayes estimators. 

\begin{thm}
\label{thm:properties_of_IL} 
The following properties hold under the iterative log-adjustment. 
\begin{itemize}
\item[1.]
The iteratively log-adjusted shrinkage prior for $\theta_i$ with finite $L$ holds the same properties given in Theorem \ref{thm:prop}. 

\item[2.]
Suppose that $b = 0$. 
Then, for any $0 < \ep < 1$, the prior probability of the log-adjusted beta distribution for $\ka _i$ falling in the interval $(0, \ep )$ tends to $1$ as $L \to \infty $; namely  
\begin{align}
\lim_{L \to \infty } \int_{0}^{\ep } \pi( \ka _i ; \ga, L ) d{\ka _i} = 1 \text{.} \non 
\end{align}
\end{itemize}
\end{thm}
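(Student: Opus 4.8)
The first claim is essentially inherited from Theorem~\ref{thm:prop}. The plan is to note that the only difference between the density in (\ref{ILAS}) and the $L=1$ prior is the extra factor $\prod_{k=1}^{L-1} f_k(1+u_i)^{-1}$, and that each iterated logarithm $f_k(1+u_i)$ is a slowly varying function of $u_i$ that tends to $f_k(1)=1$ as $u_i\downarrow 0$ and diverges only slowly as $u_i\to\infty$. Consequently the behavior of $\pi(u_i)$ near the origin is still governed by $u_i^{a-1}$, while its tail is still of the form $u_i^{-b-1}$ times a slowly varying factor, since a product of finitely many slowly varying functions remains slowly varying. Feeding this into the three arguments of Theorem~\ref{thm:prop}---integrability of the normal scale mixture, the $|\th_i|\to 0$ divergence for $a\le 1/2$, and the $|\th_i|^{-2b-1}$ tail with slowly varying correction---reproduces all three properties, the slowly varying function in property~3 now simply absorbing the iterated logarithms.

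For the second claim I would work in the $u_i$ scale. Writing $\ka_i=1/(1+u_i)$, the event $\{\ka_i<\ep\}$ is exactly $\{u_i>U_\ep\}$ with $U_\ep=(1-\ep)/\ep$, so it suffices to show that $\int_0^{U_\ep}\pi(u_i)\,du_i\big/\int_0^\infty\pi(u_i)\,du_i\to 0$ as $L\to\infty$, where $\pi(u_i)$ denotes the unnormalized kernel in (\ref{ILAS}) with $b=0$. The key algebraic observation is the chain-rule identity
\[
\frac{d}{du}f_L(1+u)=\frac{1}{1+u}\prod_{k=1}^{L-1}\frac{1}{f_k(1+u)},
\]
which shows that the kernel factorizes as $\pi(u)\propto \phi(u)\,h(u)$, where $\phi(u)=(u/(1+u))^{a-1}$ and $h(u)=f_L'(1+u)\,f_L(1+u)^{-(1+\ga)}=-\ga^{-1}\frac{d}{du}f_L(1+u)^{-\ga}$. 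Integrating this exact derivative gives $\int_U^\infty h(u)\,du=\ga^{-1}f_L(1+U)^{-\ga}$ for every $U\ge 0$ (using $f_L(1+u)\to\infty$ as $u\to\infty$), so in particular $\int_0^\infty h=\ga^{-1}$ and $\int_0^{U}h=\ga^{-1}\{1-f_L(1+U)^{-\ga}\}$.

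The engine of the proof is then the elementary fact that $1$ is a globally attracting fixed point of $f$ on $[1,\infty)$: since $f(z)=1+\log z$ satisfies $1<f(z)<z$ for $z>1$, the iterates $f_L(1+U)$ decrease monotonically to $1$ for every fixed $U>0$. Hence $\int_0^{U_\ep}h(u)\,du=\ga^{-1}\{1-f_L(1+U_\ep)^{-\ga}\}\to 0$, and moreover $h(u)\to 0$ pointwise on $(0,\infty)$ (for each fixed $L$ the function $h$ is decreasing in $u$, so $h(U)\le \tfrac{2}{U}\int_{U/2}^{U}h\to0$). Because $\phi$ is integrable on $(0,U_\ep)$ and $0\le \phi(u)h(u)\le\phi(u)$ (as $h\le 1$), dominated convergence yields $\int_0^{U_\ep}\pi(u)\,du=\int_0^{U_\ep}\phi h\to 0$. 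For the tail, $\phi(u)\ge 1$ gives $\int_{U_\ep}^\infty\pi(u)\,du\ge\int_{U_\ep}^\infty h(u)\,du=\ga^{-1}f_L(1+U_\ep)^{-\ga}\to\ga^{-1}>0$. Since the head-to-total ratio is at most the head-to-tail ratio, it tends to $0$, so the normalized probability of $\{u_i>U_\ep\}$, equivalently of $\{\ka_i<\ep\}$, tends to $1$.

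The main obstacle is conceptual rather than computational: spotting the chain-rule identity that turns the whole product of reciprocal iterated logarithms into the exact differential of $f_L(1+u)^{-\ga}$, which collapses the tail integral to a closed form and exposes the fixed-point mechanism $f_L(1+U)\to1$. The only additional technical care is the singular factor $\phi(u)\sim u^{a-1}$ at the origin when $a<1$; but this is precisely where it is harmless, since $\phi$ is integrable near $0$ while $h\to0$ there, so dominated convergence disposes of the head mass cleanly.
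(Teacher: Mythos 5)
Your proof takes essentially the same route as the paper's: your chain-rule identity, which turns the product of reciprocal iterated logarithms into the exact derivative $-\gamma^{-1}\frac{d}{du}f_L(1+u)^{-\gamma}$, is precisely the paper's starting observation $\pi_0(\kappa;\gamma,L)=\frac{\partial}{\partial\kappa}\bigl[f_L(1/\kappa)^{-\gamma}\bigr]$ (you work in the $u$-scale, the paper in the $\kappa$-scale), and both arguments then combine the resulting closed-form tail integral, the fixed-point fact $f_L\to 1$ (the paper's lemma on iterated logarithms), pointwise decay of the kernel, and dominated convergence against the integrable factor $(1-\kappa)^{a-1}$ (your $\phi$), together with a lower bound on the normalizing constant. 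Claim 1 is likewise handled as in the paper, by absorbing the extra iterated-log factors into the slowly varying function and rerunning the three arguments of Theorem 1.
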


\noindent 
The first property indicates that the iterative log-adjustments do not change the original properties of the proposed prior, including integrability, density spike around the origin and heavier tails. 
The second statement shows the convergence of the iterated log-adjusted shrinkage prior to the point mass on $\kappa _i = 0$ in distribution as $L\to \infty$. In the limit, the proposed prior does not shrink the outliers at all. 
However, losing the shrinkage effect at all is not desirable, and we fix $L$ at some finite value so that the prior density keeps the steep spike around zero. 

Although it is difficult to draw the density functions of the iteratively log-adjusted shrinkage priors as in Figure~\ref{fig:dens} for the intractable normalizing constant, the newly-multiplied log-terms can easily be evaluated and shown in Figure~\ref{fig:log}. 
It is clear in the top figure that function $f_L(1+u)$ is increasing in $u$, but converges to the constant function as $L\to \infty$, which are also verified in Appendix \ref{app:IL-prop}.
This observation implies that the marginal effect of log-terms being multiplied to the prior is diminishing as $L$ increases. The bottom figures displays the reciprocal of the iterative log-terms in the scale of $\ka _i$, which are actually multiplied to the original log-adjusted shrinkage prior. The lower densities near $\ka _i = 0$ moderates the spike and makes the density integrable, while the iterative log-term is unity around $\ka _i=1$ and affects the shrinkage effect less. 

\begin{figure}[!htbp]
	\centering
	\includegraphics[width=13cm]{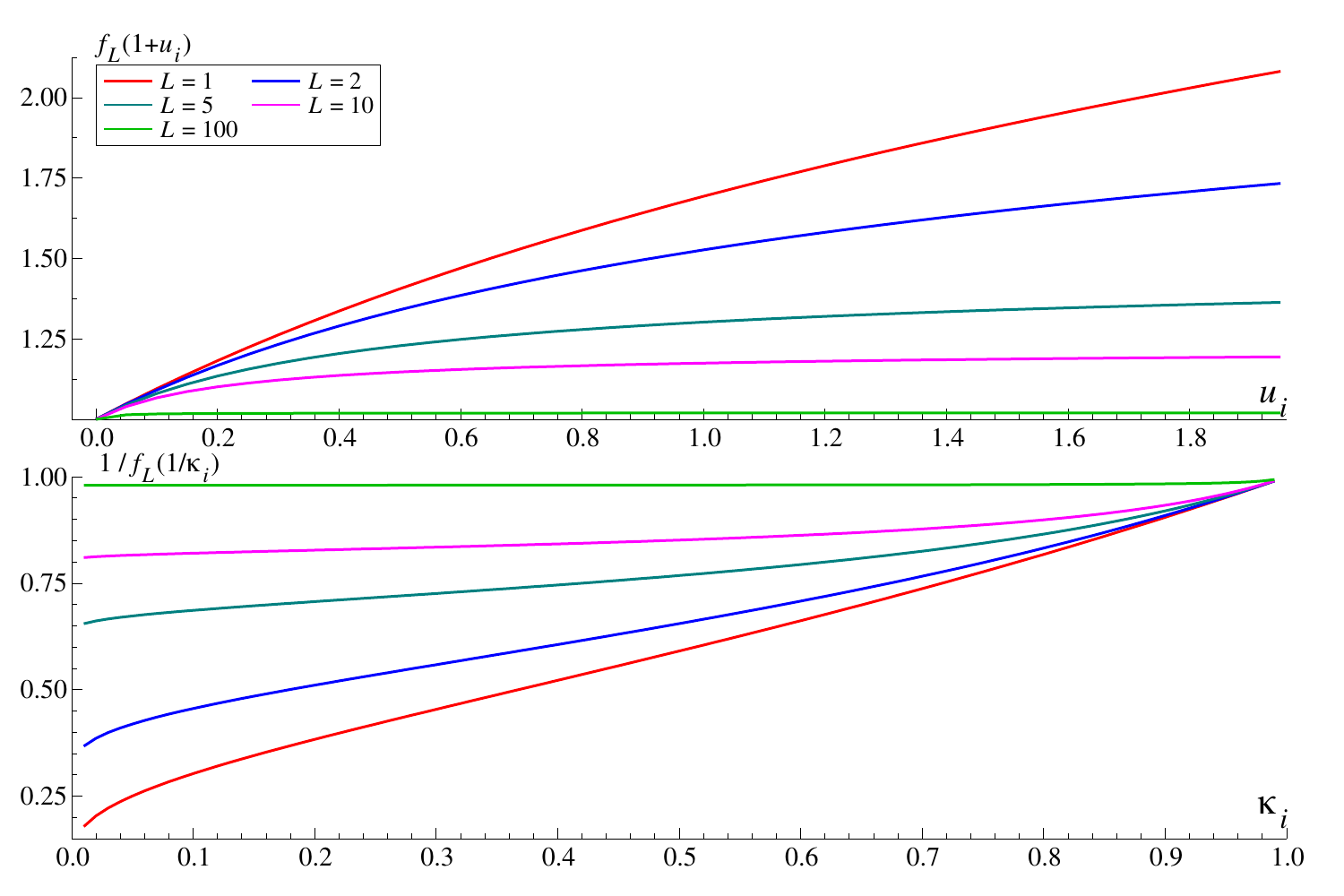} 
	\caption{
		The functions $f_L(1+u_i)$ (top) and $1 / f_L(1/\ka _i)$ (bottom) with $L=1$ (red), $2$ (blue),  $5$ (green), $10$ (pink) and $100$ (light green). It is evident that the repeated application of operation $f$ makes the function closer to constant. In the bottom figure, the decrease of functions as $\ka _i \to 0$ moderates the divergence of the prior density around zero and contributes to the integrability of the iteratively log-adjusted shrinkage priors. 
		\label{fig:log}
	}
\end{figure}

The posterior mean squared error under the iteratively log-adjusted shrinkage prior can also be computed in the similar way as in the proof of Theorem~\ref{thm:MSE1}. 

\begin{thm}
	\label{thm:MSE2}
	The posterior mean squared error under the iteratively log-adjusted shrinkage prior 
	satisfies, for $b=0$,
	\begin{align}
	{\rm{MSE}}_{\pi_{IL}} ( \th _i | y_i ) 
	&= 1 + {1 \over {y_i}^2 / 2} \Big\{ 1 - {3 a / 2 \over {y_i}^2 / 2} +\sum_{k = 1}^{L - 1} {1 \over f_k (1 + {y_i}^2 / 2)} \dotsm {1 \over f_1 (1 + {y_i}^2 / 2)} \non \\
	&\quad + (1 + \ga ) {1 \over f_L (1 + {y_i}^2 / 2)} \dotsm {1 \over f_1 (1 + {y_i}^2 / 2)} \Big\} + o \Big( {1 \over {y_i}^2} \Big) \text{,} \label{eq:MSE-IL} 
	\end{align}
	where ${y_i}^2 o(1 / {y_i}^2 ) \to 0$ as $|y_i|\to\infty$. 
\end{thm}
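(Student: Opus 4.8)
The plan is to follow the proof of Theorem~\ref{thm:MSE1} almost verbatim, since the iteratively log-adjusted prior in (\ref{ILAS}) differs from the single-log prior only through the extra slowly varying factor $\prod_{k=1}^{L-1} f_k(1+u_i)^{-1}\cdot f_L(1+u_i)^{-(1+\gamma)}$. First I would reduce the posterior MSE to two moments of the shrinkage factor. Writing $\kappa_i = 1/(1+u_i)$ and integrating out $\theta_i$, the MSE conditional on $\kappa_i$ equals $(1-\kappa_i)+\kappa_i^2 y_i^2$, so that
\[
\mathrm{MSE}_{\pi_{IL}}(\theta_i\mid y_i)=1-E[\kappa_i\mid y_i]+y_i^2\,E[\kappa_i^2\mid y_i],
\]
where, with $b=0$, the posterior of $\kappa_i$ is proportional to $\kappa_i^{-1/2}(1-\kappa_i)^{a-1}g_L(\kappa_i)\,e^{-\kappa_i y_i^2/2}$ and $g_L$ is the product of reciprocal iterated logs. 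By the marginal-likelihood representation of \cite{polson1991representation} used in Theorem~\ref{thm:MSE1}, both moments are obtained as $-D'/D$ and $D''/D$ from the single Laplace-type integral $D(x)=\int_0^1\kappa^{-1/2}(1-\kappa)^{a-1}g_L(\kappa)e^{-\kappa x}\,d\kappa$ in the variable $x=y_i^2/2$, so it suffices to expand $D$ and its first two $x$-derivatives; this is the convenient object to work with.

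Next I would extract the large-$x$ asymptotics by Laplace's method. The substitution $\kappa=t/x$ concentrates the mass at the origin; I would replace $(1-t/x)^{a-1}$ by $1+(1-a)t/x+\cdots$ and extend the range to $(0,\infty)$ with exponentially small error, leaving Gamma-type integrals. The genuinely new ingredient is the iterated-log weight evaluated at $\kappa=t/x$, for which the key structural fact is the derivative identity $f_k'(x)=x^{-1}\prod_{j=1}^{k-1}f_j(x)^{-1}$. Indeed, since $\log g_L(1/x)=-\sum_{k=1}^{L-1}\log f_k(x)-(1+\gamma)\log f_L(x)$, differentiating produces exactly the nested products $\prod_{j=1}^{k}f_j(x)^{-1}$ together with the $(1+\gamma)$-weighted term $\prod_{j=1}^{L}f_j(x)^{-1}$ that appear in (\ref{eq:MSE-IL}). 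Thus the base power $\kappa^{-1/2}$ gives the leading $1/(y_i^2/2)$, the factor $(1-\kappa)^{a-1}$ supplies the $a$-correction, and the iterated logs supply the sum- and product-terms; operationally the $t$-dependence of $g_L(t/x)$ enters through $f_1(x/t)=f_1(x)-\log t$ and the induced digamma moments $\int_0^\infty t^{s-1}(\log t)e^{-t}\,dt$, whose combinations fix the numeric coefficients.

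The step I expect to be most delicate is the bookkeeping of orders. Because $1/f_k(x)$ decays only logarithmically, the iterated-log corrections live at the intermediate scale $1/(x\log x)$, strictly between the leading $1/x$ and the $a$-term at $1/x^2$; one must therefore expand the slowly varying ratio $g_L(t/x)/g_L(1/x)$ to sufficient precision, justify term-by-term integration and the extension to $(0,\infty)$ for this non-power, non-integer weight, and confirm that the discarded pieces are indeed $o(1/y_i^2)$ relative to the retained terms. Once the expansions of $D,D',D''$ are in hand, assembling $1-E[\kappa_i\mid y_i]+y_i^2E[\kappa_i^2\mid y_i]$ (equivalently $1+D'/D+2xD''/D$) and simplifying yields (\ref{eq:MSE-IL}). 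Finally I would remark that replacing each $f_k(x)$ by $f_k(1+x)$ perturbs every term only at relative order $1/x$ and is therefore absorbed into the stated remainder.
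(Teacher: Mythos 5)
Your reduction of the MSE to posterior moments of $\ka_i$, the object $D(x)=\int_0^1\ka^{-1/2}(1-\ka)^{a-1}g_L(\ka)e^{-\ka x}d\ka$ with $x=y_i^2/2$, and the derivative identity $f_k'(x)=x^{-1}\prod_{j=1}^{k-1}f_j(x)^{-1}$ are all correct, and this moment/Laplace route is a legitimate alternative to the paper's argument (which works in the $u$-scale with a regularized integration by parts against $\pit'(u)$). The genuine gap is your final claim that the assembly ``yields (\ref{eq:MSE-IL}).'' In your formulation the \emph{only} source of iterated-log corrections is the ratio $g_L(t/x)/g_L(1/x)=1+c(x)\log t+o(c(x))$, where
\begin{align}
c(x)=\sum_{k=1}^{L-1}\prod_{j=1}^{k}\frac{1}{f_j(x)}+(1+\ga)\prod_{j=1}^{L}\frac{1}{f_j(x)}\text{,}\non
\end{align}
and carrying this through the three gamma integrals with the digamma values $\psi(3/2)-\psi(1/2)=2$ and $\psi(5/2)-\psi(1/2)=8/3$ gives
\begin{align}
1+\frac{D'}{D}+2x\frac{D''}{D}=1-\frac{1+2c(x)}{2x}+\frac{3+8c(x)}{2x}+\dotsb=1+\frac{1+3c(x)+O(c(x)^2)}{x}+\dotsb\text{,}\non
\end{align}
i.e.\ the iterated-log correction enters with coefficient $3$, not the coefficient $1$ displayed in (\ref{eq:MSE-IL}) (equivalently, the consistent refinement is $(1+c)(1+2c)/x$, the analogue of Theorem \ref{thm:MSE1}'s $(1+b)(1+2b)$ with the local exponent $c(x)$ of $g_L$ at scale $\ka\sim 1/x$ playing the role of $b$; the beta-prior benchmark forces the linear coefficient to be $3$). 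So, executed to the precision you aim for --- errors small \emph{relative} to the retained terms --- your expansion contradicts the displayed bracket rather than reproducing it.

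The theorem as stated is nonetheless true, and the missing step is the one the paper's one-line formal proof rests on: every term in the braces of (\ref{eq:MSE-IL}) other than the leading $1$ is $o(1)$, so at the stated \emph{absolute} accuracy $o(1/y_i^2)$ the claim is equivalent to ${\rm{MSE}}_{\pi_{IL}}(\th_i|y_i)=1+2/y_i^2+o(1/y_i^2)$, which follows verbatim from the proof of Theorem \ref{thm:MSE1} with $b=0$, since the ILAS prior is again of the form $u^{-1}S(u)$ with $S$ slowly varying and admitting the same domination bounds. The refined bracket itself is derived in the paper only \emph{informally}: the paper keeps the $-c(x)/(xv)$ term from $\pit'(xv)/\pit(xv)$ but simultaneously replaces $S(xv)$ by $S(x)$ in the density, discarding the same-order factor $v^{-c(x)}$; that selective approximation is exactly why it lands on coefficient $1$ where your single-source computation must give $3$. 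To make your write-up a proof of Theorem \ref{thm:MSE2}, keep your Laplace expansion for the leading term and then conclude via this $o(1)$ absorption argument, rather than claiming term-by-term agreement with the bracket.
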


\noindent 
Theorem~\ref{thm:MSE2} derives the higher-order terms of $y_i$ that is ignored in the MSE under the original log-adjusted shrinkage prior in Theorem~\ref{thm:MSE1}, whose leading term is simply $1 + 2 /y_i^2$.
We summarize our findings on the derived MSE in the following three points. First, this result reveals that the effect of hyperparameters $a$ and $\ga$ is limited to the higher-order terms, which is consistent with the result of Theorem~\ref{thm:MSE1}. In addition, the choice of hyperparameter $\ga$ is less sensitive to the Bayes estimator if $L$ is large. Secondly, there is no difference in the MSEs under the original and iteratively log-adjusted shrinkage priors in the order of $2/y_i^2$, while both priors are still superior to the beta-type shrinkage priors in the MSE in the tail. Finally, increasing the order $L$ has no negative effect on the point estimation so long as the posterior MSE under large signals is concerned. In fact, it is difficult to understand whether one increment of $L$ increases or decreases the MSE in this expression; it is determined together with the values of $y_i$ and $\ga$. We revisit this issue partially in the simulation studies in Section~\ref{sec:numerical_study}. 

The posterior computation with the iteratively log-adjusted shrinkage prior is also straightforward. 
The parameter augmentation given in (\ref{Aug}) can be generalized for the new density of $u_i$ in (\ref{ILAS}) as, ignoring the normalizing constant,
\begin{align}
\pi (u_i; \ga , L) \propto  \int_{(0, \infty )^{L + 1}} u_i^{a - 1}e^{- t_{iL} u_i} {\rm{GS}}_L ( t_{i,0:L} | \ga ) d{t_{i,0:L}} \text{,} \label{eq:augmentation_IL} 
\end{align}
where $t_{i,0:L}=(t_{i0},t_{i1},\ldots,t_{iL})$ and ${\rm GS}_L ( t_{i,0:L} | \ga )$ is the joint density of a non-stationary Markov process defined by 
\begin{align}
{\rm{GS}}_L ( t_{i,0:L} | \ga ) &= {\rm{Ga}} ( t_{i0} | 1 + \ga , 1) {\rm{Ga}} ( t_{i1} | t_{i0} + 1, 1) \times \dotsm \non \\
&\quad \times {\rm{Ga}} ( t_{i, L - 1} | t_{i, L - 2} + 1, 1) {\rm{Ga}} ( t_{iL} | t_{i,L - 1} + a + b, 1) \text{.} \non 
\end{align}
The density of $u_i$ is the shape mixture of density kernel of gamma distribution by a gamma-shape Markov process. 
The integral expression above defines latent variables $t_{il}$'s and gives the following tractable full conditional distributions for Gibbs sampler.

\begin{algo}[Gibbs sampler for local parameters under ILAS prior]
\label{algo:gibbs-ILAS}

The sampling steps for local parameters $u_i$ and $t_{0:L}$ are summarized as follows: 
\begin{itemize}
\item 
The full conditional distribution of $u_i$ is ${\rm{GIG}}(-1/2 + a, 2t_{iL}, \theta_i^2/\tau )$. 
		
\item
The full conditional distribution of $t_{0:L}$ has the compositional form, 
\begin{align}
&{\rm{Ga}} ( t_{i, 0} | 1 + \ga , f_L (1 + u_i)) {\rm Ga}( t_{i, 1} | t_{i, 0} + 1, f_{L - 1} (1 + u_i)) \times \dotsm \non \\
&\quad \times {\rm Ga}( t_{i,L - 1} | t_{i,L - 2} + 1, f_{1} (1 + u_i)) {\rm{Ga}} ( t_{i, L} | t_{i,L - 1} + a + b, 1 + u_i) \text{,} \non 
\end{align}
thereby the random samples can be sequentially generated. 
\end{itemize}
\end{algo}

\noindent 
The above procedure can be incorporated into Algorithm \ref{algo:gibbs}, which enables us to efficiently generate posterior samples of $\theta_i$. 
It is worth noting that $t_{i, k}$, $k \le L - 1$, are not used to generate samples of $\{ u_i , \th _i , \ta \} $.

The shrinkage priors with logarithm terms in their densities have been studied in various ways. We considered the prior distributions proposed in \cite{bhadra2017horseshoe+} and \cite{womack2019heavy} and confirmed that their prior densities could be extended in a similar way by repeatedly multiplying the additional terms to the density function. However, such iterative operation is extremely complex, compared with the simple recursive construction of the additional terms in this research, $f_{L+1}(z) = 1 + \log f_L(z)$, that defines the log-adjusted shrinkage priors.

\section{Numerical Study}
\label{sec:numerical_study} 

\subsection{Simulation study}
We illustrate finite-sample performance of the Bayes estimators under the proposed priors and other shrinkage priors proposed in the recent research in various situations of true sparse signals. 
We generated $n = 200$ observations from $y_i\sim N(\theta_i,1)$, where $\th_i$ is a true signal. 
We adopted the following two scenarios for $\theta_i$:
\begin{align*}
&{\rm (I)} \ \  \theta_i\sim \frac{\omega}{2}\delta(c)+\frac{\omega}{2}\delta\left(-\frac{c}2\right)+\left(1-\omega\right)\delta(0),\\ 
&{\rm (II)} \ \ \theta_i\sim \frac{\omega}{2}N(c,1)+\frac{\omega}{2}N\left(-\frac{c}{2},1\right)+\left(1-\omega\right)\delta(0),
\end{align*}
where $\delta(x)$ denotes the one-point distribution on $x$. Weight $\omega$ controls the sparsity level in the signals $\theta_i$; smaller value of $\omega$ leads to more sparsity. $c$ is the locations of non-null signals.
We considered six settings of $\omega$ and $c$ as the combinations of $\omega\in \{0.1,0.2,0.3\}$ and $c\in\{6,9\}$.

For the simulated dataset, we applied three types of proposed priors: the log-adjusted shrinkage prior with $a=1/n, b=0$ and $\ga=1$ (denoted by LAS), an adaptive version of the LAS prior with a fully Bayesian approach for $\gamma$ (denoted by aLAS), and the iteratively log-adjusted shrinkage prior with $a=1/n, b=0, \gamma=1$ and $L=3$, denoted by ILAS. 
As competitors, we also applied the Horseshoe prior \citep[HS;][]{carvalho2010horseshoe}, normal-beta prime prior \citep[NBP;][]{bai2019large}, Dirichlet-Laplace prior \citep[DL;][]{Bhatta2015}, and Horseshoe+ prior \citep[HS+;][]{bhadra2017horseshoe+}. 
To implement the posterior analysis with the HS+ and DL priors, we employed R package ``NormalBetaPrime'' \citep{bai2019large,bai2020beta} with default settings, such as the use of uniform prior on $(1/n,1)$ for the global scale parameter. 
For the other models, we adopted $\tau \sim C^{+}(0,1/n)$. 
In applying all the priors, we generated 1000 posterior samples after discarding the first 1000 samples as burn-in period, and computed posterior means of $\th_i$. 
The squared error losses of the posterior means $\thh_i$, give by $\sum_{i=1}^n(\thh_i-\th_i)^2$, were calculated and averaged over 500 replications of simulations. 

The results are reported in Table \ref{tab:sim}. 
It shows that the methods are comparable when $\omega$ is small and the true signals are very sparse. 
On the other hand, as $\omega$ increases, the proposed priors get appealing compared with the other methods. 
This result is consistent with Theorems \ref{thm:MSE1} and \ref{thm:MSE2} and reflects the fact that the proposed priors have heavier tails to accommodate large signals. 
It is also observed that the proposed three methods are almost equally successful and it is difficult to discuss their superiority. 
Focusing the comparison on the performance of LAS and aLAS, the benefit from estimating the adjustment parameter $\gamma$ could be limited possibly because of the trade-off between flexibility of data-adaptive selection of $\gamma$ and inflation of uncertainty arising from estimating $\gamma$. 
Finally, LAS and ILAS perform quite similarly in every setting, which could be related to the fact that these two priors differ only in the form of the slowly varying part. It can also be explained by their MSEs in the tails that are exactly the same up to the order of $1 / {y_i}^2$.

\begin{table}[!htb]
\caption{Comparison of averaged values of squared error losses of the posterior mean estimates of $\theta_i$ under the fixed log-adjusted shrinkage (LAS) prior, the adaptive LAS prior (aLAS), the iteratively log-adjusted shrinkage prior (ILAS) of order three (EH-IL), the Horseshoe prior (HS), the normal-beta prime prior (NBP), the Dirichlet-Laplace prior (DL), and the Horseshoe+ prior (HS+). 
The lowest averaged squared error loss for each setting (in rows) is in bold.
\label{tab:sim}
}
\begin{center}
\begin{tabular}{ccccccccccc}
\hline
 c & omega &  & LAS & aLAS & ILAS & HS & NBP & DL & HS+ \\
\hline
  & 0.1 &  & {\bf 52.5} & 55.0 & 52.5 & 55.9 & 55.7 & 62.7 & 54.2 \\
 6 & 0.2 &  & 90.6 & {\bf 89.9} & 90.7 & 96.6 & 107.6 & 120.5 & 102.7 \\
  & 0.3 &  & 124.3 & {\bf 121.3} & 124.2 & 126.4 & 159.5 & 177.9 & 150.5 \\
 \hline
   & 0.1 &  & 43.3 & 47.9 & 43.0 & 49.4 & 41.7 & 50.9 & {\bf 41.4} \\
 9 & 0.2 &  & {\bf 68.6} & 71.5 & {\bf 68.6} & 87.7 & 76.2 & 95.5 & 73.7 \\
   & 0.3 &  & {\bf 92.5} & 94.3 & {\bf 92.5} & 117.8 & 110.4 & 137.5 & 105.3 \\
 \hline
   & 0.1 &  & 47.9 & 51.1 & {\bf 47.6} & 50.9 & 49.3 & 56.2 & 48.2 \\
 6 & 0.2 &  & {\bf 84.3} & 84.7 & 84.4 & 92.2 & 98.4 & 112.4 & 94.1 \\
   & 0.3 &  & 113.5 & {\bf 111.8} & 113.7 & 120.5 & 143.1 & 164.8 & 135.7 \\
 \hline
   & 0.1 &  & 43.9 & 48.6 & 43.7 & 49.2 & 41.7 & 49.2 & {\bf 41.6} \\
 9 & 0.2 &  & {\bf 71.6} & 74.5 & {\bf 71.6} & 88.6 & 78.3 & 93.6 & 76.0 \\
  & 0.3 &  & {\bf 96.0} & 97.7 & 96.3 & 119.0 & 113.2 & 136.2 & 108.6 \\
\hline
\end{tabular}
\end{center}
\end{table}

\subsection{Example: Prostate Cancer Data}
We demonstrate real-data application of the proposed priors using a popular prostate cancer dataset in \cite{Singh2002}. 
In this dataset, there are gene expression values for $n=6033$ genes for $m = 102$ subjects, with $m_1 = 50$ normal control subjects and $m_2 = 52$ prostate cancer patients. 
The goal of this analysis is to identify genes that are significantly different between the two groups. 
We first conduct $t$-test for each gene to compute the test statistics $t_1,\ldots,t_n$, and then transform them to $z$-scores through $z_i=\Phi^{-1}(F_{m-2}^t(t_i))$, where $\Phi(\cdot)$ is the standard normal distribution function and $F_k^t(\cdot)$ is the distribution function of $t$-distribution with $k$ degrees of freedom.
For the resulting $z$-scores, $z_1,\ldots, z_n$, we applied the following model: 
$$
z_i=\theta_i+\ep_i, \ \ \ \ep_i\sim N(0,1), \ \ \ i=1,\ldots,n.
$$
We again compare the same seven priors for $\th _i$ as in the previous subsection. 
Based on 5000 posterior samples after discarding the first 5000 samples, we computed posterior means of $\th_i$.
In Table \ref{tab:exm}, we presented top 10 genes selected by \cite{Efron2010} and their estimated effect size $\th_i$ on prostate cancer.
The absolute value of each effective size estimate is largest for aLAS, because of its strong tail-robustness. 
However, the estimates of all the seven methods do not differ drastically from one another.

\begin{table}[!htb]
\caption{The $z$-scores and the effect size estimates based on posterior means for the top 10 genes selected by Efron under the seven priors. 
\label{tab:exm}
}
\begin{center}
\begin{tabular}{crrrrrrrrrrrr}
\hline
Gene & $z$-score & LAS & aLAS & ILAS & HS & NBP & DL & HS+ \\
\hline
610 & 5.29 & 4.87 & 5.00 & 4.58 & 4.89 & 4.90 & 4.58 & 4.88 \\
1720 & 4.83 & 4.29 & 4.52 & 4.11 & 4.33 & 4.36 & 4.09 & 4.40 \\
332 & 4.47 & 3.98 & 4.13 & 3.68 & 3.86 & 3.88 & 3.62 & 3.89 \\
364 & -4.42 & -3.85 & -4.09 & -3.59 & -3.80 & -3.90 & -3.64 & -3.76 \\
914 & 4.40 & 3.78 & 3.98 & 3.63 & 3.79 & 3.85 & 3.58 & 3.30 \\
3940 & -4.33 & -3.78 & -4.00 & -3.46 & -3.71 & -3.76 & -3.50 & -3.61 \\
4546 & -4.29 & -3.70 & -3.91 & -3.37 & -3.40 & -3.64 & -3.36 & -3.56 \\
1068 & 4.25 & 3.61 & 3.86 & 3.35 & 3.45 & 3.67 & 3.34 & 3.66 \\
579 & 4.19 & 3.61 & 3.81 & 3.33 & 3.33 & 3.54 & 3.27 & 3.55 \\
4331 & -4.14 & -3.61 & -3.80 & -3.28 & -3.27 & -3.54 & -3.14 & -3.42 \\
\hline
\end{tabular}
\end{center}
\end{table}

\section{Discussion}
\label{sec:discussion} 

In this research, the repeated multiplication of the log-terms to the density is proven successful in defining the new class of distributions that are continuous, proper and extremely heavy-tailed. Although the focus of this research is on tail-robustness, it is also natural to consider the idea of log-adjustment to define the stronger shrinkage effect. To be precise, {\it the doubly log-adjusted shrinkage prior}, whose density in the scale of $\ka _i$ is given by, 
\begin{align}
\pi (\ka _i;\alpha , \beta,L) &\propto 
{\ka _i}^{- 1} (1 - \ka _i )^{- 1} \non \\
& \times \left\{ \prod _{l=1}^{L-1} f_l\left(\frac{1}{\ka _i}\right)^{-1} f_l\left(\frac{1}{1-\ka _i}\right)^{-1} \right\} f_{L}\left(\frac{1}{\ka _i}\right) ^{-(1+\al)} f_{L}\left(\frac{1}{1-\ka _i}\right) ^{-(1+\be)} \non 
\end{align}
is of great interest. We proved that, as iteration $L$ increases, if we choose the sequence of hyperparameters $(\al _L,\be _L)$ appropriately, then the prior, $\pi (\ka _i;\alpha _L , \beta _L, L)$, converges in distribution to the point masses on $\{ \ka _i = 0 \}$ and $\{ \ka _i = 1 \}$, i.e., the spike-and-slab prior. For the details of the proof, see Appendix~\ref{app:double-IL}. The resemblance to the degenerate prior shown in this result could justify the use of iteratively log-adjusted priors as the continuous alternative of the degenerate variable-selection priors. Although the finite-sample properties of Bayes estimators under the prior above is not developed here, the posterior inference with this prior is feasible by the same augmentation we proved for the iteratively log-adjusted priors. We believe that the priors with iterated logarithm is the promising future research in exploring the class of shrinkage priors with logarithms.

\vspace{1cm}
\bibliographystyle{chicago}
\bibliography{MTPB}

\newpage
\appendix 

\setcounter{equation}{0}
\renewcommand{\theequation}{A\arabic{equation}}
\setcounter{section}{0}
\renewcommand{\thesection}{A\arabic{section}}
\setcounter{lem}{0}
\renewcommand{\thelem}{A\arabic{lem}}

\begin{center}
{\bf\Large Appendix}
\end{center}


\section{Proof of Theorem \ref{thm:prop}}
\label{app:proof_of_prop} 
This proof can be obtained as the special case of the proof of Theorem \ref{thm:properties_of_IL} with $L=1$ given in Appendix \ref{app:IL-proof}.

\section{Proof of Theorem \ref{thm:MSE1}}
\label{app:proof_of_MSE1} 

%

%

%

We first provide a useful lemma.
For details, see, for example, the discussion at the end of Section 1.2 of \cite{senetaregularly}. 

\begin{lem}
\label{lem:SV}
Let $L(u)$ be a strictly positive and continuously differentiable function of $u > 0$. 
Suppose that 
\begin{align}
\lim_{u \to \infty } {u L' (u) \over L(u)} = 0 \text{.} \non 
\end{align}
Then the function $L(u)$ is slowly varying as $u \to \infty $, that is, $\lim_{M\to\infty}L(Mv)/L(M)=1$ for all $v>0$. 
\end{lem}

We will suppress the subscript $i$ and write $u$, $\th $, and $y$ for $u_i$, $\th _i$, and $y_i$, respectively, for notational simplicity. 
Let $p( \th )$ and $m(y)$ denote the marginal densities of $\th $ and $y$ under the LAS prior $\pi (u) \propto u^{a - 1} (1 + u)^{- a - b} \{ 1 + \log (1 + u) \} ^{- (1 + \ga )}$. 
We define $S(u)$ as 
\begin{equation}
\label{SVF}
S(u) = \Big( {u \over 1 + u} \Big) ^{a + b} \big\{ 1 + \log (1 + u) \big\} ^{-(1 + \ga) },
\end{equation}
so that $\pi(u)=C^{-1}u^{-b-1}S(u)$ with normalizing constant $C = \int_{0}^{\infty } u^{-b-1} S(u) du$. 
From Lemma \ref{lem:SV}, it can be shown that $S(\cdot)$ is a slowly varying function.

We first note that the posterior MSE can be written as
\begin{align}
{\rm{MSE}}_{\pi } ( \th | y) &= 1 + {1 \over m(y)} {\partial ^2 m(y) \over {\partial y}^2}, \non 
\end{align} 
since the second order derivative of $m(y)$ can be expressed as  
\begin{align}
{\partial ^2 m(y) \over {\partial y}^2} &= \int_{- \infty }^{\infty } \Big[ {\partial ^2 \over {\partial y}^2} {1 \over \sqrt{2 \pi}} \exp \Big\{ - {(y - \th )^2 \over 2} \Big\} \Big] p( \th ) d\th \non \\
&= \int_{- \infty }^{\infty } \{ - 1 + ( \th - y)^2 \} {1 \over \sqrt{2 \pi}} \exp \Big\{ - {(y - \th )^2 \over 2} \Big\} p( \th ) d\th 
\text{.} \non 
\end{align}
On the other hand, since $y | u \sim {\rm{N}} (0, 1 + u)$, we have that
\begin{align}
m(y) &= {1 \over \sqrt{2 \pi }} \int_{0}^{\infty } {1 \over \sqrt{1 + u}} \exp \Big( - {y^2 / 2 \over 1 + u} \Big) \pi (u) du \non 
\end{align}
and hence that 
\begin{align}
{\partial ^2 m(y) \over {\partial y}^2} &= - {1 \over \sqrt{2 \pi }} \int_{0}^{\infty } {1 \over (1 + u)^{3 / 2}} \exp \Big( - {y^2 / 2 \over 1 + u} \Big) \pi (u) du \non \\
&\quad + {y^2 \over \sqrt{2 \pi }} \int_{0}^{\infty } {1 \over (1 + u)^{5 / 2}} \exp \Big( - {y^2 / 2 \over 1 + u} \Big) \pi (u) du \text{.} \non 
\end{align}
Therefore, by making the change of variables $u = ( y^2 / 2) v$, it follows that
\begin{equation}\label{post-MSE}
{1 \over m(y)} {\partial ^2 m(y) \over {\partial y}^2}
=
\frac{2}{y^2}\frac{2I(y,5/2)-I(y,3/2)}{I(y,1/2)},
\end{equation}
where
$$
I(y,k) = \int_{0}^{\infty } \Big\{ {y^2 / 2 \over 1 + ( y^2 / 2) v} \Big\} ^{ k } v^{- b - 1} \exp \Big\{ - {y^2 / 2 \over 1 + ( y^2 / 2) v} \Big\} S \Big( {y^2 \over 2} v \Big) dv 
$$
for $k\in \{1/2, 3/2, 5/2\}$. 
We here use the following asymptotic evaluation of the integral $I(y,k)$:
\begin{equation}\label{I}
\lim_{|y|\to\infty}I(y,k)/S(y^2/2) = \Gamma(b+k),
\end{equation}
for which the proof is given later. 
Using this result, (\ref{post-MSE}) can be approximated as 
\begin{align}
{1 \over m(y)} {\partial ^2 m(y) \over {\partial y}^2} / \frac{2}{y^2} 
&=
\frac{ 2 \Gamma(b+5/2) \{ 1 + o(1) \} - \Gamma(b+3/2) \{ 1 + o(1) \} }{ \Gamma(b+1/2) \{ 1 + o(1) \} } \non \\
&\sim 
(1+b)(1+2b) \non
\end{align}
as $|y| \to \infty $, which is the desired result. 

Finally, we give the proof of (\ref{I}).
Let $M=y^2/2$ and define $h_M(v,k)$ by 
$$
h_M (v,k) = \Big( {M \over 1 + M v} \Big) ^k v^{- b - 1} \exp \Big( - {M \over 1 + M v} \Big) {S(M v) \over S(M)}. 
$$
Then it holds that $I(y,k)/S(y^2/2)=\int_0^{\infty}h_M(v, k)dv$.
Note that 
\begin{align}
{S(M v) \over S(M)} = v^{a + b} \Big( {1 + M \over 1 + M v} \Big) ^{a + b} \Big\{ {1 + \log (1 + M) \over 1 + \log (1 + Mv)} \Big\} ^{1 + \ga } \text{.} \non 
\end{align}
Then, for any $M \ge 1$ and $v \ge 1$, we have 
\begin{align}
h_M (v, k) &\le v^{- k - b - 1} v^{a + b} \Big( {1 + M \over 1 + M v} \Big) ^{a + b} \Big\{ {1 + \log (1 + M) \over 1 + \log (1 + Mv)} \Big\} ^{1 + \ga } \le 2^{a + b} v^{- k - b - 1} \text{.} \non 
\end{align}
Next, for any $M \ge 1$ and $v \le 1$, 
\begin{align}
{1 + \log (1 + M) \over 1 + \log (1 + Mv)} &= \exp \Big\{ \int_{v}^{1} {M \over 1 + M t} {1 \over 1 + \log (1 + M t)} dt \Big\} \non \\
&\le \exp \Big( \int_{v}^{1} {1 \over 1 / M + t} dt \Big) = {1 / M + 1 \over 1 / M + v} \le {2 \over 1 / M + v}. \non 
\end{align}
Then, it follows that, for $M \ge 1$ and $v \le 1$,  
\begin{align}
h_M (v, k) &= {e^{- 1 / (1 / M + v)} \over (1 / M + v)^k} v^{a - 1} \Big( {1 / M + 1 \over 1 / M + v} \Big) ^{a + b} \Big\{ {1 + \log (1 + M) \over 1 + \log (1 + Mv)} \Big\} ^{1 + \ga } \non \\
&\le {e^{- 1 / (1 / M + v)} \over (1 / M + v)^{k + a + b}} v^{a - 1} 2^{a + b} {2^{1 + \ga } \over (1 / M + v)^{1 + \ga }} \non \\
&\le \Big( \sup_{x \in (0, \infty )} {e^{- 1 / x} \over x^{k + a + b + 1 + \ga }} \Big) 2^{a + b + 1 + \ga } v^{a - 1} < \infty \non 
\end{align}
noting that the function $e^{- 1 / x} / x^{k + a + b + 1 + \ga }$ is bounded in $(0, \infty )$. 
Therefore, from the dominated convergence theorem, we have 
$$
\lim_{M\to\infty}\int_0^{\infty}h_M(v,k)
=
\int_0^{\infty}v^{-k-b-1}\exp(-1/v)dv
=
\Gamma(b+k), 
$$
which proves (\ref{I}).

\section{Details on sampling from $\gamma$ given in Algorithm \ref{algo:gamma}}
\label{app:gamma}


We describe and justify the algorithm of the independent Metropolis-Hastings method for sampling hyperparameter $\ga$ by evaluating the upper and lower bounds of the intractable normalizing constant. 
The normalizing constant of the LAS prior, which is dependent on $\gamma$, is given by the integral 
\begin{equation}
C( \ga ) = C \Big( a = {1 \over n}, b = 0, \gamma \Big) = \int _0^1 \frac{\kappa ^{-1} (1-\kappa )^{1 / n - 1} }{ (1-\log \kappa )^{1+\gamma} } d\kappa = \int_{0}^{\infty } g(x; \ga ) dx \text{,} \label{eq:nc_g} 
\end{equation}
where $g(x; \ga ) = (1 - e^{- x} )^{1 / n - 1} (1 + x)^{- 1 - \ga }$ for $x > 0$; the last integral is obtained by the change of variables $\ka = e^{-x}$. 
This is bounded above and below by, with any $K>0$ and $N = K^3$, 
\begin{equation*}
\begin{split}
U( \gamma ,K ) &= \frac{ ( 1-e^{-1/K} )^{1 / n} }{(1 / n) e^{-1/K}} + \frac{(1-e^{-K})^{1 / n - 1}}{\gamma (1+K)^{\gamma}} \\
& \ \ \ \ \ 
+ \sum _{j = 1}^N  \frac{K^2-1}{KN} g \Big( \frac{1+(j - 1)(K^2-1)/N}{K} ; \ga \Big) \\
L( \gamma ,K ) &= \frac{ ( 1-e^{-1/K} )^{1 / n} }{(1 / n) (1+1/K)^{\gamma} } + \frac{1}{\gamma (1+K)^{\gamma}} + \sum _{j = 1}^N \frac{K^2-1}{KN} g \Big( \frac{1 + j (K^2-1)/N}{K} ; \ga \Big) 
\end{split}
\end{equation*}
i.e., $L( \gamma , K) \le C( \gamma ) \le U( \gamma , K)$ for any $( \gamma ,K)$. 
In addition, these bounds can be as tight as desired if one increases $K$; we prove $L( \gamma , K) \to C( \gamma)$ and $U( \gamma , K) \to C( \gamma )$ as $K\to \infty$ in Lemma \ref{lem:Riemann} at the end of this section. 
These bounds are utilized in implementing the independent Metropolis-Hasting algorithm, where the acceptance probability is dependent on the intractable normalizing constant and cannot be directly computed, but their upper and lower bounds are available with arbitrary accuracy. 

The prior for $\ga$ is the gamma distribution, $\gamma \sim Ga(a_0^{\gamma },b_0^{\gamma })$. Each likelihood $\pi ( u_i ; \gamma )$ can be approximated by $\gamma ^a \exp \{ -\gamma \log (1+\log (1+u_i) ) \}$ with $a>0$ so that the gamma prior becomes conjugate. The proposal distribution is the posterior distribution with the approximate likelihoods, or $\gamma \sim Ga(\gamma |a_1^{\gamma},b_1^{\gamma})$, where
\begin{equation*}
a_1^{\gamma} = a_0^{\gamma} + n a, \ \ \ \mathrm{and} \ \ \ b_1^{\gamma} = b_0^{\gamma} + \sum _{i=1}^n \log (1+\log (1+u_i) ),
\end{equation*}
and we set $a = 1 / n$. 
Denote the current state by $\gamma$, and the candidate drawn from the proposal by $\gamma '$. The acceptance probability is 
\begin{equation*}
A(\gamma \to \gamma ') = \Big\{ \frac{C( \gamma )}{C( \gamma ')} \Big\} ^n \left( \frac{\gamma}{\gamma '} \right) ^{n (1 / n)} \text{,} 
\end{equation*}
which is not evaluated directly due to the intractable constant $C( \gamma ) / C( {\ga }' )$. 
We bound these constants to obtain the upper and lower bounds of the acceptance probability. The bounds of the acceptance probability are defined by 
\begin{equation*}
\underline{w} ( \ga , {\ga }' , K) = \Big\{ \frac{L( \gamma , K)}{U( \gamma ' , K)} \Big\} ^n \frac{\gamma}{\gamma '} \qquad \mathrm{and} \qquad \overline{w} ( \ga , {\ga }' , K) = \Big\{ \frac{U( \gamma , K)}{L( \gamma ' , K)} \Big\} ^n \frac{\gamma}{\gamma '} \text{,} 
\end{equation*} 
and satisfy 
\begin{align}
\underline{w} ( \ga , {\ga }' , K) \le A(\gamma \to \gamma ') \le \overline{w} ( \ga , {\ga }' , K). \non 
\end{align}
The definitions of $\underline{w} ( \ga , {\ga }' , K)$ and $\overline{w} ( \ga , {\ga }' , K)$ above are used in the sampling algorithm given in Algorithm \ref{algo:gamma}. 

The bounds, $U(\ga , K)$ and $L(\ga , K)$, are obtained by a straightforward application of the Riemann approximation, and their properties are verified by the following lemma. 
\begin{lem}
\label{lem:Riemann}
Let $g( \cdot )$, $\underline{g} _K ( \cdot )$, and $\overline{g} _K ( \cdot )$, $K = 1, 2, \dotsc $, be integrable functions defined on $(0, \infty )$ satisfying $0 \le \underline{g} _K (x) \le g(x) \le \overline{g} _K (x) < \infty $ for all $x \in (0, \infty )$. 
Assume that $g( \cdot )$ is nonincreasing on $(0, \infty )$. 
Let $0 < x_{0}^{(K)} < \dots < x_{l_K}^{(K)} < \infty $ for $K = 1, 2, \dotsc $ and assume that $\lim_{K \to \infty } x_{0}^{(K)} = 0$ and that $\lim_{K \to \infty } x_{l_K}^{(K)} = \infty $. 
Suppose that $\lim_{K \to \infty } \int_{0}^{x_{0}^{(K)}} \overline{g} _K (x) dx = \lim_{K \to \infty } \int_{x_{l_K}^{(K)}}^{\infty } \overline{g} _K (x) dx = 0$ and that $\lim_{K \to \infty } \sum_{j = 1}^{l_K} ( x_{j}^{(K)} - x_{j - 1}^{(K)} ) \{ g( x_{j - 1}^{(K)} ) - g( x_{j}^{(K)} ) \} = 0$. 
Then 
\begin{align}
0 &\le \int_{0}^{x_{0}^{(K)}} \underline{g} _K (x) dx + \sum_{j = 1}^{l_K} ( x_{j}^{(K)} - x_{j - 1}^{(K)} ) g( x_{j}^{(K)} ) + \int_{x_{l_K}^{(K)}}^{\infty } \underline{g} _K (x) dx \non \\
&\le \int_{0}^{\infty } g(x) dx \non \\
&\le \int_{0}^{x_{0}^{(K)}} \overline{g} _K (x) dx + \sum_{j = 1}^{l_K} ( x_{j}^{(K)} - x_{j - 1}^{(K)} ) g( x_{j - 1}^{(K)} ) + \int_{x_{l_K}^{(K)}}^{\infty } \overline{g} _K (x) dx < \infty \label{eq:Riemann_1} 
\end{align}
for all $K = 1, 2, \dotsc $ and 
\begin{align}
\int_{0}^{\infty } g(x) dx &= \lim_{K \to \infty } \Big\{ \int_{0}^{x_{0}^{(K)}} \underline{g} _K (x) dx + \sum_{j = 1}^{l_K} ( x_{j}^{(K)} - x_{j - 1}^{(K)} ) g( x_{j}^{(K)} ) + \int_{x_{l_K}^{(K)}}^{\infty } \underline{g} _K (x) dx \Big\} \non \\
&= \lim_{K \to \infty } \Big\{ \int_{0}^{x_{0}^{(K)}} \overline{g} _K (x) dx + \sum_{j = 1}^{l_K} ( x_{j}^{(K)} - x_{j - 1}^{(K)} ) g( x_{j - 1}^{(K)} ) + \int_{x_{l_K}^{(K)}}^{\infty } \overline{g} _K (x) dx \Big\} \text{.} \label{eq:Riemann_2} 
\end{align}
\end{lem}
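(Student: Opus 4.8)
The plan is to derive Lemma~\ref{lem:Riemann} directly from the monotonicity of $g$ together with the pointwise squeeze $\underline{g}_K \le g \le \overline{g}_K$; the argument splits into two independent parts, the fixed-$K$ bracketing \eqref{eq:Riemann_1} and the convergence \eqref{eq:Riemann_2}, neither of which requires anything deep.

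For \eqref{eq:Riemann_1} I would split the integral as $\int_0^\infty g = \int_0^{x_0^{(K)}} g + \sum_{j=1}^{l_K}\int_{x_{j-1}^{(K)}}^{x_j^{(K)}} g + \int_{x_{l_K}^{(K)}}^{\infty} g$. On each subinterval $[x_{j-1}^{(K)},x_j^{(K)}]$ the monotonicity of $g$ gives $g(x_j^{(K)}) \le g(x) \le g(x_{j-1}^{(K)})$, hence $(x_j^{(K)}-x_{j-1}^{(K)})\,g(x_j^{(K)}) \le \int_{x_{j-1}^{(K)}}^{x_j^{(K)}} g \le (x_j^{(K)}-x_{j-1}^{(K)})\,g(x_{j-1}^{(K)})$, and summing over $j$ brackets the ``bulk'' term. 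For the two end pieces I would instead use $\underline{g}_K \le g \le \overline{g}_K$ to obtain $\int_0^{x_0^{(K)}}\underline{g}_K \le \int_0^{x_0^{(K)}} g \le \int_0^{x_0^{(K)}}\overline{g}_K$ and the analogous inequalities on $(x_{l_K}^{(K)},\infty)$. Adding the three two-sided estimates yields \eqref{eq:Riemann_1}; the terminal ``$<\infty$'' is immediate since the $j$-sum is finite and $\overline{g}_K$ is integrable, so its two tail integrals are finite.

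For \eqref{eq:Riemann_2} write $S_K^-$ and $S_K^+$ for the lower and upper expressions appearing in \eqref{eq:Riemann_1}, so that $S_K^- \le \int_0^\infty g \le S_K^+$ for every $K$. The crux is that the gap vanishes: $S_K^+ - S_K^- = \int_0^{x_0^{(K)}}(\overline{g}_K-\underline{g}_K) + \sum_{j=1}^{l_K}(x_j^{(K)}-x_{j-1}^{(K)})\{g(x_{j-1}^{(K)})-g(x_j^{(K)})\} + \int_{x_{l_K}^{(K)}}^{\infty}(\overline{g}_K-\underline{g}_K)$. The middle sum tends to $0$ by hypothesis, and since $\underline{g}_K \ge 0$ the two end terms are dominated by $\int_0^{x_0^{(K)}}\overline{g}_K$ and $\int_{x_{l_K}^{(K)}}^{\infty}\overline{g}_K$, which also tend to $0$ by hypothesis. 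Hence $S_K^+ - S_K^- \to 0$, and the squeeze $S_K^- \le \int_0^\infty g \le S_K^+$ forces both $S_K^-$ and $S_K^+$ to converge to $\int_0^\infty g$, which is \eqref{eq:Riemann_2}.

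I do not expect a genuine obstacle; the only points needing a little care are checking that every term in the chain \eqref{eq:Riemann_1} is finite so the inequalities are meaningful, and observing that one never needs the end-interval integrals of $\underline{g}_K$ to vanish --- only those of $\overline{g}_K$ --- because $\underline{g}_K \ge 0$ already pins the gap from below. Applying the lemma to $C(\gamma)$ in \eqref{eq:nc_g} is then bookkeeping: $g(x;\gamma)=(1-e^{-x})^{1/n-1}(1+x)^{-1-\gamma}$ is a product of two positive nonincreasing functions, hence nonincreasing; one takes the grid $x_j^{(K)} = (1+j(K^2-1)/N)/K$ with $N=K^3$, bounds $g$ on $(0,x_0^{(K)})$ and on $(x_{l_K}^{(K)},\infty)$ by elementary functions with explicit antiderivatives, and verifies the three hypotheses, which reproduces exactly the stated formulas for $L(\gamma,K)$ and $U(\gamma,K)$.
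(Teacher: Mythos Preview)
Your proposal is correct and follows essentially the same route as the paper: the inequalities \eqref{eq:Riemann_1} come from monotonicity on each subinterval together with $\underline{g}_K\le g\le\overline{g}_K$ on the tails, and \eqref{eq:Riemann_2} follows by bounding the gap $S_K^+-S_K^-$ above (using $\underline{g}_K\ge 0$ to drop or dominate the $\underline{g}_K$ tail integrals) by the sum of the three quantities assumed to vanish. The paper simply declares \eqref{eq:Riemann_1} ``trivial'' and writes out the same gap estimate you give, so there is no substantive difference.
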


\begin{proof}
The inequalities in (\ref{eq:Riemann_1}) are trivial. 
We obtain (\ref{eq:Riemann_2}) since 
\begin{align}
0 &\le \int_{0}^{x_{0}^{(K)}} \overline{g} _K (x) dx + \sum_{j = 1}^{l_K} ( x_{j}^{(K)} - x_{j - 1}^{(K)} ) g( x_{j - 1}^{(K)} ) \non \\
& \hspace{1cm}  + \int_{x_{l_K}^{(K)}}^{\infty } \overline{g} _K (x) dx - \sum_{j = 1}^{l_K} ( x_{j}^{(K)} - x_{j - 1}^{(K)} ) g( x_{j}^{(K)} ) \non \\
&= \sum_{j = 1}^{l_K} ( x_{j}^{(K)} - x_{j - 1}^{(K)} ) \{ g( x_{j - 1}^{(K)} ) - g( x_{j}^{(K)} ) \} + \int_{0}^{x_{0}^{(K)}} \overline{g} _K (x) dx + \int_{x_{l_K}^{(K)}}^{\infty } \overline{g} _K (x) dx \to 0 \non 
\end{align}
as $K \to \infty $ by assumption. 
\end{proof}

In our problem, where function $g$ is given in (\ref{eq:nc_g}), the condition in the lemma, $\lim_{K \to \infty } \sum_{j = 1}^{l_K} ( x_{j}^{(K)} - x_{j - 1}^{(K)} ) \{ g( x_{j - 1}^{(K)} ) - g( x_{j}^{(K)} ) \} = 0$, is satisfied. To see this, observe that  $\lim_{K \to \infty } \{ \max_{1 \le j \le l_K} ( x_{j}^{(K)} - x_{j - 1}^{(K)} ) \} g( x_{0}^{(K)} ) = 0$, and that the grid becomes sufficiently fine when $K \to \infty $.

\section{Properties of iterated logarithmic functions}
\label{app:IL-prop} 
We here give some properties of iterated logarithmic functions related to Figure \ref{fig:log} in the following lemma. 

\begin{lem} \label{lem:fn}
	For $x>1$,
	\begin{enumerate}
		\item $f_L(x) > 1$.
		
		\item $f_L(x)$ is increasing in $x$.
		
		\item $f_{L + 1} (x) < f_{L}(x)$ (decreasing in $L$ at each point $x$).
		
		\item $\displaystyle \lim _{L\to \infty} f_L (x) = 1$. 
	\end{enumerate}
\end{lem}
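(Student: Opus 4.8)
\textbf{Proof proposal for Lemma \ref{lem:fn}.}

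The plan is to treat the four statements in order, since each one feeds into the next. For statement 1, I would argue by induction on $L$. The base case $f_1(x) = 1 + \log x > 1$ holds for $x > 1$ because $\log x > 0$. For the inductive step, if $f_L(x) > 1$ then $\log f_L(x) > 0$, hence $f_{L+1}(x) = f(f_L(x)) = 1 + \log f_L(x) > 1$. Notice this also shows $f_L(x)$ always lands in the domain $(1,\infty)$ where $f$ is well-behaved, so all the compositions make sense.

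For statement 2, I would again use induction together with the chain rule. The function $f_1(x) = 1 + \log x$ has derivative $1/x > 0$, so it is increasing. Assuming $f_L$ is increasing (and positive, by statement 1), the composition $f_{L+1} = f \circ f_L$ is increasing because $f(z) = 1 + \log z$ is increasing on $(1,\infty)$ and $f_L$ maps $(1,\infty)$ into $(1,\infty)$; the composition of increasing functions is increasing. (Alternatively, $f_{L+1}'(x) = f_L'(x)/f_L(x) > 0$.) For statement 3, the key observation is the elementary inequality $1 + \log z < z$ for all $z > 1$ (equality only at $z=1$, and the difference $z - 1 - \log z$ is strictly increasing for $z>1$). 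Applying this with $z = f_L(x) > 1$ gives immediately $f_{L+1}(x) = 1 + \log f_L(x) < f_L(x)$.

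Statement 4 is where the real content lies, and I expect it to be the main obstacle. By statements 1 and 3, for each fixed $x > 1$ the sequence $(f_L(x))_{L \ge 1}$ is strictly decreasing and bounded below by $1$, hence converges to some limit $\ell = \ell(x) \ge 1$. Taking $L \to \infty$ in the recursion $f_{L+1}(x) = 1 + \log f_L(x)$ and using continuity of $z \mapsto 1 + \log z$ on $[1,\infty)$ gives $\ell = 1 + \log \ell$. But the only solution of $\ell = 1 + \log \ell$ with $\ell \ge 1$ is $\ell = 1$, again because $z - 1 - \log z > 0$ strictly for $z > 1$. Hence $\ell(x) = 1$, which is the claim. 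The only subtlety to be careful about is justifying that the limit is not some value where the recursion breaks down — but since the whole sequence stays in $[1,\infty)$ and $f$ is continuous there, passing to the limit is legitimate. I would state the convexity/monotonicity fact $z - 1 - \log z > 0$ for $z>1$ once at the start and reuse it in statements 3 and 4.
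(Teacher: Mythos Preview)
Your proposal is correct and follows essentially the same approach as the paper: properties 1 and 2 are immediate from the definition (you spell out the induction, the paper leaves it implicit), property 3 uses the inequality $z-1>\log z$ for $z>1$, and property 4 uses the monotone convergence of the decreasing, bounded-below sequence followed by the fixed-point argument $a=1+\log a\Rightarrow a=1$. Your write-up is slightly more detailed, but there is no substantive difference.
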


\begin{proof}
The first and second properties follow immediately from the definition of $f_L$. 
The third property is verified by the inequality $z-1 > \log z$ for $z > 1$. 
To prove the last property, fix $x > 1$ and write $a_L = f_L (x)$. 
Then this sequence is decreasing and bounded below by $1$. 
Therefore, it has a limit $a = \lim _{L\to \infty} a_L$ in $[1, \infty )$. 
Now, by the definition of $f_{L + 1}$, we have $a_{L + 1} = 1 + \log a_{L}$. 
Letting $L \to \infty $, we have $a = 1 + \log a$, which shows that $a = 1$. 
\end{proof}

\section{Proof of Theorem \ref{thm:properties_of_IL}}
\label{app:IL-proof}
As in the proof of Theorem 
\ref{thm:MSE1}, we suppress the subscript $i$. 
Here we write $\pi (u)$ and $\pi ( \ka )$ for $\pi (u; \ga , L)$ and $\pi ( \ka ; \ga , L)$ and use $p( \th )$ 
to denote the marginal 
density of $\th $ under this prior. 
Let 
\begin{align}
S(u) 
= \Big( {u \over 1 + u} \Big) ^{a + b} \Big\{ \prod_{k = 1}^{L - 1} {1 \over f_k (1 + u)} \Big\} {1 \over \{ f_L (1 + u) \} ^{1 + \ga }} \non 
\end{align}
and let $C = \int_{0}^{\infty } u^{- b - 1} S(u) du$, so that $\pi (u) = C^{- 1} u^{- b - 1} S(u)$. 
Then $S(u)$ is a slowly varying function from Lemma \ref{lem:SV}. 
Note that the above definition of $S(u)$ is not identical to that in Section 
\ref{app:proof_of_MSE1}. 
\label{app:thm3} 
Also, let 
\begin{align}
\pi _0 ( \ka ) &= \pi _{0} ( \ka ; L) = {\partial \over \partial \ka } \Big[ {1 \over \{ f_L (1 / \kappa ) \} ^{\ga }} \Big] = {\ga \over \ka } \Big\{ \prod _{k = 1}^{L - 1} {1 \over f_k (1 / \ka )} \Big\} {1 \over \{ f_L (1 / \ka ) \} ^{1 + \ga }} \text{.} \non 
\end{align}
Then we have 
\begin{align}
\int _0^1 \pi _0 ( \ka ) d\kappa = 1 \text{.} \non 
\end{align}

\subsection*{Integrability (the first property in Theorem \ref{thm:prop})}

The inequality, $\ka ^b (1 - \ka )^{a - 1} \pi _0 ( \ka ) \le (1 - \ka )^{a - 1} \pi _0 ( \ka )$, shows the integrability of $\pi (\ka)$ in $(0,1/2)$,
\begin{align}
\int_{0}^{1 / 2} (1 - \ka )^{a - 1} \pi _0 ( \ka ) d\ka &\le \Big\{ \sup_{\ka \in (0, 1 / 2)} (1 - \ka )^{a - 1} \Big\} \int_{0}^{1 / 2} \pi _0 ( \ka ) d\ka < \infty \non 
\end{align}
and the integrability in $(1/2,1)$, 
\begin{align}
\int_{1 / 2}^{1} (1 - \ka )^{a - 1} \pi _0 ( \ka ) d\ka &\le \int_{1 / 2}^{1} (1 - \ka )^{a - 1} 2 \ga d\ka < \infty \text{.} \non 
\end{align}

\subsection*{Spike around the origin  (the second property in Theorem \ref{thm:prop})}

By the monotone convergence theorem, 
\begin{align}
C \sqrt{2 \pi } p( \th ) &= \int_{0}^{\infty } {u^{- 1 / 2 + a - 1} \over (1 + u)^{a + b}} \Big\{ \prod_{k = 1}^{L - 1} {1 \over f_k (1 + u)} \Big\} {e^{- \th ^2 / 2 u} \over \{ f_L (1 + u) \} ^{1 + \ga }} du \non \\
&\ge \int_{0}^{1} {u^{- 1 / 2 + a - 1} \over 2^{a + b}} \Big\{ \prod_{k = 1}^{L - 1} {1 \over f_k (2)} \Big\} {e^{- \th ^2 / 2 u} \over \{ f_L (2) \} ^{1 + \ga }} du \to \infty \label{tMSE1p1} 
\end{align}
as $\th \to 0$ for $a \le 1 / 2$. 

\subsection*{Slowly varying density  (the third property in Theorem \ref{thm:prop})}

By the change of variables $u = ( \th ^2 / 2) v$, it follows that 
\begin{align}
p( \th ) &= {C^{- 1} \over \sqrt{2 \pi }} \int_{0}^{\infty } u^{- 1 / 2 - b - 1} e^{- ( \th ^2 / 2) / u} S(u) du \non \\
&= {C^{- 1} \over \sqrt{2 \pi }} \Big( {\th ^2 \over 2} \Big) ^{- 1 / 2 - b} \int_{0}^{\infty } v^{- 1 / 2 - b - 1} e^{- 1 / v} S \Big( {\th ^2 \over 2} v \Big) dv \text{.} \non 
\end{align}
Now for $\th ^2 / 2 \ge 1$, we have 
\begin{align*}
{S(( \th ^2 / 2) v) \over S( \th ^2 / 2)} &= v^{a + b} \Big\{ {1 + \th ^2 / 2 \over 1 + ( \th ^2 / 2) v} \Big\} ^{a + b} \non \\
&\quad \times \Big\{ \prod_{k = 1}^{L - 1} {f_{k} (1 + \th ^2 / 2) \over f_{k} (1 + ( \th ^2 / 2) v)} \Big\} \Big\{ {f_L (1 + \th ^2 / 2) \over f_L (1 + ( \th ^2 / 2) v) } \Big\} ^{1 + \ga } \\
&\le 2^{a + b} \max \{ 1, 1 / v^{L + \ga } \} \non 
\end{align*}
since 
\begin{align}
{1 + \th ^2 / 2 \over 1 + ( \th ^2 / 2) v} &\le {1 \over v} {1 + \th ^2 / 2 \over \th ^2 / 2} \le {2 \over v} \non 
\end{align}
and since for all $k = 1, \dots , L$, 
\begin{align}
{f_{k} (1 + \th ^2 / 2) \over f_{k} (1 + ( \th ^2 / 2) v)} \le 1 \non 
\end{align}
when $v \ge 1$ while 
\begin{align}
{f_{k} (1 + \th ^2 / 2) \over f_{k} (1 + ( \th ^2 / 2) v)} 
&= \exp \Big\{ \int_{t = v}^{t = 1} {\partial \over \partial t} \log f_{k} (1 + ( \th ^2 / 2) t) dt \Big\} \non \\
&= \exp \Big\{ \int_{t = v}^{t = 1} {1 \over f_{k} (1 + ( \th ^2 / 2) t) \dotsm f_{1} (1 + ( \th ^2 / 2) t)} {\th ^2 / 2 \over 1 + ( \th ^2 / 2) t} dt \Big\} \non \\
&\le \exp \Big( \int_{t = v}^{t = 1} {dt \over t} \Big) = 1 / v \non 
\end{align}
when $v < 1$. 
Thus, by the dominated convergence theorem, we obtain 
\begin{align}
{p( \th ) \over S( \th ^2 / 2)} \sim {C^{- 1} \over \sqrt{2 \pi }} \Big( {\th ^2 \over 2} \Big) ^{- 1 / 2 - b} \Ga \Big( {1 \over 2} + b \Big) \non 
\end{align}
as $| \th | \to \infty $, where $S( \th ^2 / 2)$ is a slowly varying function of $| \th |$. This completes the proof of the first statement of Theorem~\ref{thm:properties_of_IL}. 

\subsection*{Convergence to the degenerate distribution (the second property in Theorem \ref{thm:properties_of_IL})}

To prove the second statement of Theorem~\ref{thm:properties_of_IL}, we have that 
\begin{align}
\lim_{L \to \infty } \int_{0}^{\ep } \pi _0 ( \ka ; L) d\ka = \lim_{L \to \infty } \Big[ {1 \over \{ f_L (1 / \ka ) \} ^{\ga }} \Big] _{0}^{\ep } = \lim_{L \to \infty } {1 \over \{ f_L (1 / \ep ) \} ^{\ga }} = 1 \label{plim_newp1} 
\end{align}
for all $0 < \ep < 1$. Hence, 
\begin{align}
0 &= \limsup_{L \to \infty } \int_{\ka }^{1} \pi _0 ( \tilde{\ka } ; L) d\tilde{\ka } \ge \limsup_{L \to \infty } \int_{\ka }^{1} \ka \pi _0 ( \ka ; L) d\tilde{\ka } = \limsup_{L \to \infty } \{ \ka (1 - \ka ) \pi _0 ( \ka ; L) \} \ge 0 \text{,} \non 
\end{align}
or, equivalently, $\lim_{L \to \infty } \pi _0 ( \ka ; L) = 0$, for all $\ka \in (0, 1)$. 
Now, set $b = 0$. 
Then, for any $0 < \ep < 1$, the probability of $( \ep , 1)$ under the prior $\pi ( \ka ; L) = \pi ( \ka ; \ga , L) \propto (1 - \ka )^{a - 1} \pi _0 ( \ka ; L)$ is 
\begin{align}
\int_{\ep }^{1} \pi ( \ka ; L) d\ka &= \frac{\displaystyle \int_{\ep }^{1} (1 - \ka )^{a - 1} \pi _0 ( \ka ; L) d\ka  }{\displaystyle \int_{0}^{1} (1 - \ka )^{a - 1} \pi _0 ( \ka ; L) d\ka  } \le \frac{\displaystyle \int_{\ep }^{1} (1 - \ka )^{a - 1} \pi _0 ( \ka ; L) d\ka }{\displaystyle \Big\{ \inf_{\ka \in (0, 1 / 2)} (1 - \ka )^{a - 1} \Big\} \int_{0}^{1 / 2} \pi _0 ( \ka ; L) d\ka  } \text{.} \non 
\end{align}
The numerator on the right side 
converges to zero as $L \to \infty $ by the dominated convergence theorem since $\pi _0 ( \ka ; L) \le \ga / \ep$ for all $\ep < \ka < 1$ and $\pi _0 ( \ka ; L) \to 0$. 
Also, it follows from (\ref{plim_newp1}) that $\int_{0}^{1 / 2} \pi _0 ( \ka ; L) d\ka \to 1$ as $L \to \infty $. 
Thus, $\lim_{L \to \infty } \int_{0}^{\ep } \pi ( \ka ; L) d\ka = \lim_{L \to \infty } \{ 1 - \int_{\ep }^{1} \pi ( \ka ; L) d\ka \} = 1$, which is the desired result.

\section{Proof of Theorem \ref{thm:MSE2}}
\label{app:thm4} 

As in the 
previous section, we suppress the subscript $i$. 
Let $\pi (u)$, 
$S(u)$, and $C$ be as in Section \ref{app:thm3}. 

%
The formal proof of Theorem \ref{thm:MSE2} is completely analogous to that of Theorem \ref{thm:MSE1} since 
\begin{align}
&- {3 a / 2 \over y^2 / 2} + \sum_{k - 1}^{L - 1} {1 \over f_k (1 + y^2 / 2) \dotsm f_1 (1 + y^2 / 2)} + {1 + \ga \over f_L (1 + y^2 / 2) \dotsm f_1 (1 + y^2 / 2)} = o(1) \non 
\end{align}
as $|y| \to \infty $. 
In the following, we informally derive the expression (\ref{eq:MSE-IL}) using integration by parts. 

First, $\pi (u)$ is approximated by $\pit (u) = \pi (u) e^{- \ep / u}$ for some small $\ep > 0$ in the sense that 
\begin{align}
{\rm{MSE}}_{\pi } ( \th | y) - 1 \approx {\rm{MSE}}_{\pit } ( \th | y) - 1 \text{.} \label{pMSE-ILp00} 
\end{align}
Denote by $\tilde{m} (y)$ the marginal density of $y$ under the prior ${\tilde{C}}^{- 1} \pit (u)$, where $\tilde{C} = \int_{0}^{\infty } \pit (u) du$. 
Then, as in the proof of Theorem \ref{thm:MSE1}, we have that 
$$
{\rm{MSE}}_{\pit } ( \th | y) = 1 + {{\tilde{m}}'' (y) \over \tilde{m} (y)} 
$$
and that 
$$
\tilde{m} (y) = {{\tilde{C}}^{- 1} \over \sqrt{2 \pi }} \int_{0}^{\infty } {1 \over \sqrt{1 + u}} \exp \Big( - {y^2 / 2 \over 1 + u} \Big) \pit (u) du \text{.} 
$$
Now, by integration by parts, 
\begin{align}
{\tilde{m}}'' (y) &= {{\tilde{C}}^{- 1} \over \sqrt{2 \pi }} \int_{0}^{\infty } \Big\{ {y^2 \over (1 + u)^2} - {1 \over 1 + u} \Big\} {1 \over \sqrt{1 + u}} \exp \Big( - {y^2 / 2 \over 1 + u} \Big) \pit (u) du \non \\
&= {{\tilde{C}}^{- 1} \over \sqrt{2 \pi }} \Big\{ \Big[ {2 \over \sqrt{1 + u}} \exp \Big( - {y^2 / 2 \over 1 + u} \Big) \pit (u) \Big] _{0}^{\infty } \non \\
&\quad - \int_{0}^{\infty } {2 \over \sqrt{1 + u}} \exp \Big( - {y^2 / 2 \over 1 + u} \Big) {\pit }' (u) du \Big\} \non \\
&= - 2 {{\tilde{C}}^{- 1} \over \sqrt{2 \pi }} \int_{0}^{\infty } {1 \over \sqrt{1 + u}} \exp \Big( - {y^2 / 2 \over 1 + u} \Big) {\pit }' (u) du \text{.} \non 
\end{align}
Therefore, by the change of variables $u = ( y^2 / 2) v$, we obtain 
\begin{align}
{\rm{MSE}}_{\pit } ( \th | y) 
&= 1 - 2 \frac{ \displaystyle \int_{0}^{\infty } {1 \over \sqrt{1 + u}} \exp \Big( - {y^2 / 2 \over 1 + u} \Big) {\pit }' (u) du }{ \displaystyle \int_{0}^{\infty } {1 \over \sqrt{1 + u}} \exp \Big( - {y^2 / 2 \over 1 + u} \Big) \pit (u) du } \non \\
&= 1 - 2 \frac{ \displaystyle \int_{0}^{\infty } \sqrt{{y^2 / 2 \over 1 + ( y^2 / 2) v}} \exp \Big\{ - {y^2 / 2 \over 1 + ( y^2 / 2) v} \Big\} {\pit }' \Big( {y^2 \over 2} v \Big) dv }{ \displaystyle \int_{0}^{\infty } \sqrt{{y^2 / 2 \over 1 + ( y^2 / 2) v}} \exp \Big\{ - {y^2 / 2 \over 1 + ( y^2 / 2) v} \Big\} \pit \Big( {y^2 \over 2} v \Big) dv } \text{.} \label{pMSE-ILp1} 
\end{align}
Now, if $|y|$ is sufficiently large, it follows that 
\begin{align}
&\sqrt{{y^2 / 2 \over 1 + ( y^2 / 2) v}} \exp \Big\{ - {y^2 / 2 \over 1 + ( y^2 / 2) v} \Big\} \approx {1 \over \sqrt{v}} \exp \Big( - {1 \over v} \Big) \label{pMSE-ILp2} 
\end{align}
and that 
\begin{align}
&\pit \Big( {y^2 \over 2} v \Big) \approx \pi \Big( {y^2 \over 2} v \Big) \approx C^{- 1} \Big( {y^2 \over 2} \Big) ^{- 1} v^{- 1} S \Big( {y^2 \over 2} \Big) \label{pMSE-ILp3} 
\end{align}
since $S(u)$ is a slowly varying function. 
Furthermore, since 
\begin{align}
{{\pit }' (u) \over \pit (u)} - {\ep \over u^2} = {{\pi }' (u) \over \pi (u)} &= - {1 \over u} + {a \over u (1 + u)} - \sum_{k = 1}^{L - 1} {1 \over f_k (1 + u)} \dotsm {1 \over f_1 (1 + u)} {1 \over 1 + u} \non \\
&\quad - (1 + \ga ) {1 \over f_L (1 + u)} \dotsm {1 \over f_1 (1 + u)} {1 \over 1 + u} \non 
\end{align}
and since $f_k (1 + u)$, $k = 1, \dots , L$, are slowly varying functions, it also follows that 
\begin{align}
{{\pit }' (( y^2 / 2) v) \over \pit (( y^2 / 2) v)} &\approx - {1 \over y^2 / 2} {1 \over v} + {a\over ( y^2 / 2)^2} {1 \over v^2} - \sum_{k = 1}^{L - 1} {1 \over f_k (1 + y^2 / 2)} \dotsm {1 \over f_1 (1 + y^2 / 2)} {1 \over y^2 / 2} {1 \over v} \non \\
&\quad - (1 + \ga ) {1 \over f_L (1 + y^2 / 2)} \dotsm {1 \over f_1 (1 + y^2 / 2)} {1 \over y^2 / 2} {1 \over v} \label{pMSE-ILp4} 
\end{align}
for sufficiently large $|y|$. 
Substituting (\ref{pMSE-ILp2}), (\ref{pMSE-ILp3}), and (\ref{pMSE-ILp4}) into (\ref{pMSE-ILp1}) yields 
\begin{align}
{\rm{MSE}}_{\pit } ( \th | y) 
&\approx 1 - 2 \frac{ \int_{0}^{\infty } v^{- 1 / 2} e^{- 1 / v} v^{- 1} \{ {\pit }' (( y^2 / 2) v) / \pit (( y^2 / 2) v) \} dv }{ \int_{0}^{\infty } v^{- 1 / 2} e^{- 1 / v} v^{- 1} dv } \non \\
&\approx 1 + 2 \Big\{ {1 \over y^2 / 2} {1 \over 2} - {a \over ( y^2 / 2)^2} {1 \over 2} {3 \over 2} \non \\
&\quad + \sum_{k = 1}^{L - 1} {1 \over f_k (1 + y^2 / 2)} \dotsm {1 \over f_1 (1 + y^2 / 2)} {1 \over y^2 / 2} {1 \over 2} \non \\
&\quad + (1 + \ga ) {1 \over f_L (1 + y^2 / 2)} \dotsm {1 \over f_1 (1 + y^2 / 2)} {1 \over y^2 / 2} {1 \over 2} \Big\} \label{pMSE-ILp5} 
\end{align}
since $\Ga (1 / 2 + k) = \int_{0}^{\infty } v^{- 1 / 2 - k - 1} e^{- 1 / v} dv$ for $k = 0, 1, 2$. 
Finally, combining (\ref{pMSE-ILp00}) and (\ref{pMSE-ILp5}) yields (\ref{eq:MSE-IL}).

\section{Derivation of the augmentation (\ref{eq:augmentation_IL}) and Gibbs sampling given in Algorithm \ref{algo:gibbs-ILAS}}
\label{app:gibbs-ILAS} 

Let $u > 0$, $c_0 > 0$, and $c_1 , \dots , c_L \ge 0$. 
We first note that, for any positive $(t_0,t_1,\ldots , t_L)$, 
\begin{align}
&{1 \over (1 + u)^{c_L}} \prod_{k = 1}^{L} {1 \over \{ f_k (1 + u) \} ^{c_{L - k}}} \non \\
& \ \ \ = {{\rm{Ga}} ( t_0 | c_0 , 1) \over {\rm{Ga}} ( t_0 | c_0 , f_L (1 + u))} {{\rm{Ga}} ( t_1 | t_0 + c_1 , 1) \over {\rm{Ga}} ( t_1 | t_0 + c_1 , f_{L - 1} (1 + u))} \times \dotsm \non \\
& \ \ \ \  \times {{\rm{Ga}} ( t_{L - 1} | t_{L - 2} + c_{L - 1} , 1) \over {\rm{Ga}} ( t_{L - 1} | t_{L - 2} + c_{L - 1} , f_1 (1 + u))} {{\rm{Ga}} ( t_L | t_{L - 1} + c_L , 1) \over {\rm{Ga}} ( t_L | t_{L - 1} + c_L , 1 + u)} e^{- t_L u} \text{.} \label{eq:GS_1} 
\end{align}
The denominator in the right is in fact the probability density of $t_{0:L} \in (0,\infty )^{L+1}$; 
\begin{align}
\int_{(0, \infty )^{L + 1}} {\rm{Ga}} ( t_0 | c_0 , f_L (1 + u)) 
\dotsm {\rm{Ga}} ( t_L | t_{L - 1} + c_L , 1 + u) d{t_{0:L}} = 1 \text{.} \non 
\end{align}
From (\ref{eq:GS_1}), we obtain 
\begin{align}
&{1 \over (1 + u)^{c_L}} \prod_{k = 1}^{L} {1 \over \{ f_k (1 + u) \} ^{c_{L - k}}} \non \\
&= \int_{(0, \infty )^{L + 1}} \Big\{ {\rm{Ga}} ( t_0 | c_0 , 1) {\rm{Ga}} ( t_1 | t_0 + c_1 , 1) \times \dotsm \non \\
&\quad \times {\rm{Ga}} ( t_{L - 1} | t_{L - 2} + c_{L - 1} , 1) {\rm{Ga}} ( t_L | t_{L - 1} + c_L , 1) e^{- t_L u} \Big\} d{t_{0:L}} \label{eq:integration_GS_gamma} 
\end{align}
Expression (\ref{eq:integration_GS_gamma}) gives the augmentation (\ref{eq:augmentation_IL}) by setting 
$c_0 = 1 + \ga $, $c_1=\cdots = c_{L-1} =1$, and $c_L=a+b$. 
Moreover, the full conditional distributions of the latent variables in Algorithm \ref{algo:gibbs-ILAS} are obtained from (\ref{eq:augmentation_IL}) and (\ref{eq:GS_1}). 
\section{Properties of doubly log-adjusted shrinkage priors in Section \ref{sec:discussion}}
\label{app:double-IL}

In this section, we prove the results stated in the discussion of Section \ref{sec:discussion}. 
For $\al , \be > 0$, consider the density of the doubly log-adjusted prior given by 
\begin{align}
\pi ( \ka ; \al , \be , L) \propto \pi _0 ( \ka ; \al , L) \pi _0 (1 - \ka ; \be , L) \text{,} \quad \ka \in (0, 1) \text{,} \non 
\end{align}
where 
\begin{align}
\pi _0 ( \ka ; \ga , L) &= {\partial \over \partial \ka } \Big[ {1 \over \{ f_L (1 / \kappa ) \} ^{\ga }} \Big] = {\ga \over \ka } \Big\{ \prod _{k = 1}^{L - 1} {1 \over f_k (1 / \ka )} \Big\} {1 \over \{ f_L (1 / \ka ) \} ^{1 + \ga }} \non 
\end{align}
as defined in Appendix~\ref{app:thm3} (or, this is the ILAS prior with $a = 1$ and $b = 0$ in (\ref{ILAS})), and let $F( \ka ; \al , \be , L)$ denote the corresponding distribution function. 
For $0 < \ep < 1$, let 
\begin{align}
R( \ep ; \al , \be , L) = {F( \ep ; \al , \be , L) \over 1 - F(1 - \ep ; \al , \be , L)} \non 
\end{align}
be the ratio of the prior probability of $\ka \in (0, \ep )$ to that of $\ka \in (1 - \ep , 1)$. 
Proposition \ref{prp:double-IL} summarizes the fundamental properties of $\pi ( \ka ; \al , \be , L)$. 

\begin{prp}
\label{prp:double-IL}
The prior $\pi ( \ka ; \al , \be , L)$ satisfies the following properties. 
\begin{enumerate}
\item
$\int _{0}^{1} \pi ( \ka ; \al , \be , L) d\ka < \infty $. 
\item
\begin{enumerate}

\item
If $0 < \ep \le 1 / 2$, then $R( \ep ; \al , \be , L)$ is increasing in $\be $ and decreasing in $\al $. 
\item
$\lim_{\be \to 0} R( \ep ; \al , \be , L) = 0$ and $\lim_{\al \to 0} R( \ep ; \al , \be , L) = \infty $. 
\item 
$R( \ep ; \al , \be , L) \gtreqless 1$ if and only if $\al \lesseqgtr \be $. 
\end{enumerate}
\item
For two arbitrary bounded sequences of positive real numbers, $\al _L$ and $\be _L$, $L = 1, 2, \dotsc $, we have $\lim_{L \to \infty } \{ F(1 - \ep ; \al _L , \be _L , L) - F( \ep ; \al _L , \be _L , L) \} = 0$ if $0 < \ep < 1 / 2$. 
\item
The prior density of $u = (1 - \ka ) / \ka $ can be expressed as 
\begin{align}
\pi (u; \al , \be , L) &\propto 
{1 \over u} \int_{(0, \infty )^{L + 1}} \Big\{ {\rm{Ga}} ( r_0 | 1 + \al , 1) {\rm{Ga}} ( r_1 | r_0 + 1, 1) \times \dotsm \non \\
&\quad \times {\rm{Ga}} ( r_{L - 1} | r_{L - 2} + 1, 1) {\rm{Ga}} ( r_L | r_{L - 1} , 1) e^{- r_L u} \Big\} d{r_{0:L}} \non \\
&\quad \times \int_{(0, \infty )^{L + 1}} \Big\{ {\rm{Ga}} ( s_0 | 1 + \be , 1) {\rm{Ga}} ( s_1 | s_0 + 1, 1) \times \dotsm \non \\
&\quad \times {\rm{Ga}} ( s_{L - 1} | s_{L - 2} + 1, 1) {\rm{Ga}} ( s_L | s_{L - 1} , 1) e^{- s_L / u} \Big\} d{s_{0:L}} \text{.} \non 
\end{align}

\end{enumerate}
\end{prp}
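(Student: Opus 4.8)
The unifying device is the closed form of the distribution function of $\pi_0(\cdot;\ga,L)$. Because the statement exhibits $\pi_0(\ka;\ga,L)=\partial_\ka[\{f_L(1/\ka)\}^{-\ga}]$, the function $G_\ga(\ka)=\{f_L(1/\ka)\}^{-\ga}$ is its CDF: $G_\ga(0^+)=0$ and $G_\ga(1)=1$ since $f_L(1)=1$ by Lemma~\ref{lem:fn}. Hence $G_\ga=G_1^{\ga}$, $\pi_0(\ka;\ga,L)=\ga\,G_1(\ka)^{\ga-1}G_1'(\ka)$, and with $A(\ka)=G_1(\ka)$, $B(\ka)=G_1(1-\ka)$ one has, on $(0,1/2)$, $0<A\le B\le1$ with $A/B$ increasing in $\ka$ (because $G_1$ is increasing, Lemma~\ref{lem:fn}). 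For the first property I would split $\int_0^1$ at $1/2$: on $(0,1/2)$ the factor $\pi_0(1-\ka;\be)$ is bounded (its argument lies in $(1/2,1)$ and $\pi_0(x;\be)\le\be/x$), so the integral is at most a constant times $\int_0^{1/2}\pi_0(\ka;\al)\,d\ka=G_\al(1/2)\le1$; the piece over $(1/2,1)$ is symmetric.

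For the ratio $R$, a change of variables $\ka\mapsto1-\ka$ in the denominator gives $R(\ep;\al,\be,L)=\big(\int_0^\ep A^{\al}B^{\be}w\,d\ka\big)/\big(\int_0^\ep A^{\be}B^{\al}w\,d\ka\big)$, where $w=\phi(\ka)\phi(1-\ka)>0$ and $\phi=G_1'/G_1$; in particular $R(\al,\be)R(\be,\al)=1$, so $R(\al,\al)=1$ and it suffices to treat monotonicity in $\be$. Property 2(c) is then pointwise, since the ratio of integrands equals $(A/B)^{\al-\be}$ and $A\le B$ on $(0,\ep)$. For 2(a) I would differentiate, obtaining $\partial_\be\log R=E_{P_1}[\log B]-E_{P_2}[\log A]$ with $P_1\propto A^{\al}B^{\be}w$ and $P_2\propto A^{\be}B^{\al}w$; because the likelihood ratio $\mathrm{d}P_1/\mathrm{d}P_2\propto(A/B)^{\al-\be}$ is monotone in $\ka$, $P_1$ and $P_2$ are stochastically ordered, and combining this ordering with the monotonicity of $\ka\mapsto\log A,\log B$ and the bound $\log A\le\log B$ yields $E_{P_1}[\log B]\ge E_{P_2}[\log A]$ in both cases $\al\le\be$ and $\al\ge\be$. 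Hence $R$ increases in $\be$; decreasing in $\al$ and 2(c) follow from the symmetry. For 2(b), letting $\be\downarrow0$ and using monotone convergence, the numerator tends to $\int_0^\ep A^{\al}w\,d\ka<\infty$ while the denominator tends to $\int_0^\ep B^{\al}w\,d\ka=\infty$ (near $\ka=0$ the integrand is $\sim\phi(\ka)=(\log A)'$, nonintegrable as $A(0^+)=0$), so $R\to0$; the limit $\al\to0$ then follows from $R(\al,\be)R(\be,\al)=1$.

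Property 3 I would establish by bounding $\Pr(\ep<\ka<1-\ep)=N_{\mathrm{mid}}/Z$ with $Z=\int_0^1\pi_0(\ka;\al_L)\pi_0(1-\ka;\be_L)\,d\ka$. The numerator is controlled by $\pi_0(1-\ka;\be_L)\le\be_L/\ep$ on the middle, giving $N_{\mathrm{mid}}\le(\be_L/\ep)\{G_{\al_L}(1-\ep)-G_{\al_L}(\ep)\}$, which tends to $0$ since $f_L(1/\ep),f_L(1/(1-\ep))\to1$ (Lemma~\ref{lem:fn}) and $\al_L$ is bounded. The essential point is the matching lower bound $Z\gtrsim\be_L$. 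Writing $Z=E_{\ka\sim\mu_{\al_L}}[\pi_0(1-\ka;\be_L)]$, where $\mu_{\al_L}$ has density $\pi_0(\cdot;\al_L,L)$, and substituting $G_1(\ka)=s^{1/\al_L}$ with $s\sim U(0,1)$, converts $Z/\be_L$ into an expectation against a \emph{fixed} measure; Fatou's lemma then reduces the bound to the pointwise statement $\pi_0(1-\ka_L(s);\be_L)/\be_L\to1$ for $\ka_L(s)=G_1^{-1}(s^{1/\al_L})$. This holds because $G_1^{-1}$ inverts the $L$-fold iterate of $f(w)=1+\log w$, so $1/\ka_L(s)$ is the $L$-fold iterate of $z\mapsto e^{z-1}$ at $s^{-1/\al_L}>1$ and grows like a tower of exponentials; thus $\ka_L(s)\to0$ with $L\,\ka_L(s)\to0$, and each factor of $\pi_0(1-\ka;\be_L)/\be_L$ tends to $1$ (for the product of reciprocal logarithms, $\sum_{k=1}^{L-1}\log f_k(1/(1-\ka_L))\le(L-1)\ka_L/(1-\ka_L)\to0$). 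Finally $N_{\mathrm{mid}}/Z\le(1/\ep)\{G_{\al_L}(1-\ep)-G_{\al_L}(\ep)\}/(1-o(1))\to0$, with the $\be_L$ cancelling.

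Property 4 is mechanical: the substitution $u=(1-\ka)/\ka$ turns the density into $\pi(u;\al,\be,L)\propto u^{-1}$ times two products of reciprocal iterated logarithms, one in $1+u$ with exponents $(1,\dots,1,1+\al)$ on $f_1,\dots,f_L$ and one in $1+1/u$ with exponents $(1,\dots,1,1+\be)$; applying the augmentation identity~(\ref{eq:integration_GS_gamma}) to each product (with $c_0=1+\al$, resp.\ $1+\be$, $c_1=\dots=c_{L-1}=1$, and $c_L=0$) produces exactly the stated double Gamma-shape-process integral. The main obstacle is the lower bound $Z\gtrsim\be_L$ in Property 3: the measure $\mu_{\al_L}$ varies with $L$, and one must pin down the rate at which the mass of $\pi_0(\cdot;\al_L,L)$ escapes to $0$. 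I expect the fixed-measure substitution combined with Fatou and the tower growth of $G_1^{-1}$ to make this rate transparent, sidestepping delicate direct estimates of $f_L$ at growing arguments.
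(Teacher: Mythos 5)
Your parts 1 and 4 essentially coincide with the paper's proof (split at $1/2$ using $\pi_0(x;\ga,L)\le \ga/x$; change of variables $u=(1-\ka)/\ka$ plus the identity (\ref{eq:integration_GS_gamma}) with $c_0\in\{1+\al,1+\be\}$, $c_1=\dots=c_{L-1}=1$, $c_L=0$). For 2(a), 2(b) and 3 you take genuinely different routes, and they are sound. In 2(a) the paper differentiates $R$ in $\be$ and bounds the resulting $-\log f_L$ terms by their values at the endpoints of $(0,\ep)$; your reformulation via the closed-form CDF $G_\ga=G_1^{\ga}$ and the tilted measures $P_1\propto A^{\al}B^{\be}w$, $P_2\propto A^{\be}B^{\al}w$ replaces this with a monotone-likelihood-ratio argument, and the two chains $E_{P_1}[\log B]\ge E_{P_1}[\log A]\ge E_{P_2}[\log A]$ (case $\al\ge\be$) and $E_{P_2}[\log A]\le E_{P_2}[\log B]\le E_{P_1}[\log B]$ (case $\al\le\be$) are both valid. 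Your 2(b) (monotone convergence plus divergence of $\int_0^{\ep}B^{\al}w\,d\ka$ near $\ka=0$, then the reciprocal identity) is cleaner than the paper's explicit upper bound. For part 3 the paper deliberately avoids analyzing the inverse of $f_L$: it bounds the numerator by a covariance (Chebyshev-correlation) inequality and the denominator by a product arranged so that the $\be_L$-dependent factor cancels; you instead prove the sharp lower bound on the normalizing constant, $Z\gtrsim\be_L$, via the probability integral transform and Fatou. This works, but the crucial pointwise step — $L\,\ka_L(s)\to0$, where $1/\ka_L(s)=g^{\circ L}(s^{-1/\al_L})$ with $g(z)=e^{z-1}$ — is only asserted; since $g$ has a neutral fixed point at $1$, you should record the two-stage argument: $g^{\circ L}(s^{-1/M})$ is increasing in $L$ and cannot converge (the only fixed point of $g$ is $1$), hence exceeds $4$ after finitely many steps, after which $g(z)\ge 2z$ gives exponential growth and $L/g^{\circ L}(s^{-1/M})\to 0$.

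The one genuine gap is 2(c). The claim is stated for all $0<\ep<1$ (the restriction $\ep\le1/2$ appears only in 2(a)), and the paper spends a separate step on $1/2<\ep<1$. Your pointwise comparison of integrands, $(A/B)^{\al-\be}\gtreqless1$, relies on $A\le B$ on $(0,\ep)$, which is exactly $\ka\le1-\ka$ and fails on $(1/2,\ep)$ once $\ep>1/2$; so your argument covers only $\ep\le1/2$. The repair is the complement symmetry $\pi(1-\ka;\al,\be,L)=\pi(\ka;\be,\al,L)$, which gives
\begin{align}
R(\ep;\al,\be,L)=\frac{F(\ep;\al,\be,L)}{F(\ep;\be,\al,L)}=\frac{1-F(1-\ep;\be,\al,L)}{1-F(1-\ep;\al,\be,L)} \text{,} \non
\end{align}
so that for $\ep>1/2$ the assertion reduces to the already-proved comparison of $F(1-\ep;\al,\be,L)$ and $F(1-\ep;\be,\al,L)$ at the level $1-\ep\le1/2$. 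This is precisely the extension step in the paper's proof, and without it (or something equivalent) your 2(c) is incomplete.
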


\begin{proof}
Let $\pit _0 ( \ka ; \ga , L) = \pi _0 ( \ka ; \ga , L) / \ga $. 
Then property 1 follows immediately by 
\begin{align}
&\int_{0}^{1} \pit _0 ( \ka ; \al , L) \pit _0 (1 - \ka ; \be , L) d\ka \non \\
&\le \int_{0}^{1} \pit _0 ( \ka ; \min \{ \al , \be \} , L) \pit _0 (1 - \ka ; \min \{ \al , \be \} , L) d\ka \non \\
&= 2 \int_{0}^{1 / 2} \pit _0 ( \ka ; \min \{ \al , \be \} , L) \pit _0 (1 - \ka ; \min \{ \al , \be \} , L) d\ka \non \\
&\le 4 \int_{0}^{1 / 2} \pit _0 ( \ka ; \min \{ \al , \be \} , L) d\ka < \infty \text{.} \non 
\end{align}

For property 2, we start from the proof of 2-(a) and 2-(b) with the focus on $\be$. Note that 
\begin{align}
R( \ep ; \al , \be , L) &= \frac{ \int_{0}^{\ep } \pit _0 ( \ka ; \al , L) \pit _0 (1 - \ka ; \be , L) d\ka }{ \int_{1 - \ep }^{1} \pit _0 ( \ka ; \al , L) \pit _0 (1 - \ka ; \be , L) d\ka } = \frac{ \int_{0}^{\ep } \pit _0 ( \ka ; \al , L) \pit _0 (1 - \ka ; \be , L) d\ka }{ \int_{0}^{\ep } \pit _0 ( \ka ; \be , L) \pit _0 (1 - \ka ; \al , L) d\ka } \text{.} \non 
\end{align}
Then we have 
\begin{align}
&\Big\{ \int_{0}^{\ep } \pit _0 ( \ka ; \be , L) \pit _0 (1 - \ka ; \al , L) d\ka \Big\} ^2 {\partial \over \partial \be } R( \ep ; \al , \be , L) \non \\
&= \int_{0}^{\ep } \pit _0 ( \ka ; \al , L) \pit _0 (1 - \ka ; \be , L) \{ - \log f_L (1 / (1 - \ka )) \} d\ka \int_{0}^{\ep } \pit _0 ( \ka ; \be , L) \pit _0 (1 - \ka ; \al , L) d\ka \non \\
&\quad - \int_{0}^{\ep } \pit _0 ( \ka ; \al , L) \pit _0 (1 - \ka ; \be , L) d\ka \int_{0}^{\ep } \pit _0 ( \ka ; \be , L) \pit _0 (1 - \ka ; \al , L) \{ - \log f_L (1 / \ka ) \} d\ka \non \\
&> [ \{ - \log f_L (1 / (1 - \ep )) \} - \{ - \log f_L (1 / \ep ) \} ] \non \\
&\quad \times \int_{0}^{\ep } \pit _0 ( \ka ; \al , L) \pit _0 (1 - \ka ; \be , L) d\ka \int_{0}^{\ep } \pit _0 ( \ka ; \be , L) \pit _0 (1 - \ka ; \al , L) d\ka \text{.} \non 
\end{align}
The right-hand side is nonnegative for $0 < \ep \le 1 / 2$. 
Thus, $R( \ep ; \al , \be , L)$ is increasing in $\be $ for $0 < \ep \le 1 / 2$. 
In addition, 
\begin{align}
0 &\le R( \ep ; \al , \be , L) = \frac{ \int_{0}^{\ep } \pi _0 ( \ka ; \al , L) \pi _0 (1 - \ka ; \be , L) d\ka }{ \int_{0}^{\ep } \pi _0 ( \ka ; \be , L) \pi _0 (1 - \ka ; \al , L) d\ka } \non \\
&\le {\be / \al \over 1 - \ep } \Big\{ \prod_{k = 1}^{L - 1} f_k (1 / (1 - \ep )) \Big\} \{ f_L (1 / (1 - \ep )) \} ^{1 + \al } \frac{ \int_{0}^{\ep } \pi _0 ( \ka ; \al , L) d\ka }{ \int_{0}^{\ep } \pi _0 ( \ka ; \be , L) d\ka } \non \\
&= \Big\{ \prod_{k = 1}^{L - 1} f_k (1 / (1 - \ep )) \Big\} \{ f_L (1 / (1 - \ep )) \} ^{1 + \al } \frac{ \int_{0}^{\ep } \pi _0 ( \ka ; \al , L) d\ka }{\al (1 - \ep ) } {\be \over \{ f_L (1 / \ep ) \} ^{- \be } } \to 0 \non 
\end{align}
as $\be \to 0$. 

To prove 2-(a) and 2-(b) for $\al$, note that $R( \ep ; \al , \be , L) = 1 / R( \ep ; \be , \al , L)$ for any $\al , \be > 0$ and any $0 < \ep < 1$. From this fact, it is immediate that $R( \ep ; \al , \be , L) \to \infty $ as $\al \to 0$ and that $R( \ep ; \al , \be , L)$ is decreasing in $\al $ for $0 < \ep \le 1 / 2$. 

To prove property 2-(c), we first assume $0< \ep \le 1/2$. Because $R(\ep; \al , \al ; L) = 1$ for any $\ep \in (0,1)$ by definition and it is increasing in the second $\al$, we have $1 = R(\ep; \al , \al ; L) < R(\epsilon ; \al ,\be, L)$ for $\al < \be$. Similarly, we have $R(\epsilon ; \al ,\be, L) < 1$ for $\al > \be$. To extend this result to $1/2 < \epsilon < 1$, we first confirm that $R(1-\ep , \al , \be , L) > 1$ for $\al < \be$, which implies $\int_{0}^{1 - \ep } \pi _0 ( \ka ; \al , \be , L) d\ka > \int_{0}^{1 - \ep } \pi _0 ( \ka ; \be , \al , L) d\ka$. Then, observe that  
\begin{align}
R( \ep ; \al , \be , L) &= \frac{ \int_{0}^{\ep } \pi _0 ( \ka ; \al , \be , L) d\ka }{ \int_{0}^{\ep } \pi _0 ( \ka ; \be, \al , L) d\ka } \non \\
&= \frac{ 1 - \int_{\ep }^{1} \pi _0 ( \ka ; \al , \be , L) d\ka }{ 1 - \int_{\ep }^{1} \pi _0 ( \ka ; \be, \al , L) d\ka } = \frac{ 1 - \int_{0}^{1 - \ep } \pi _0 ( \ka ; \be , \al , L) d\ka }{ 1 - \int_{0}^{1 - \ep } \pi _0 ( \ka ; \al , \be , L) d\ka } > 1 \text{.} \non 
\end{align}
The same argument applies to $\al > \be$. Thus, we conclude for any $0 < \ep < 1$ that $R( \ep ; \al , \be , L) \gtreqless 1$ if and only if $\al \lesseqgtr \be $, which completes the proof of 2-(c). 

For the proof of property 3, let $F_0 ( \ka ; \ga , L)$ denote the distribution function of the prior $\pi _0 ( \ka ; \ga , L)$. Select $M > 0$ large enough so that $0 < \al _L < M$ for all $L \ge 1$. 
Then we have 
\begin{align}
1 &\ge F_0 ( \ep ; \al _L , L) = \int_{0}^{\ep } \pi _0 ( \ka ; \al _L , L) d\ka 
= \{ f_L (1 / \ep ) \} ^{- \al _L } \ge \{ f_L (1 / \ep ) \} ^{- M} \to 1 \label{pdouble-ILp1} 
\end{align}
as $L \to \infty $. 
Next, 
\begin{align}
F(1 - \ep ; \al _L , \be _L , L) - F( \ep ; \al _L , \be _L , L) &= \int_{\ep }^{1 - \ep } \pi ( \ka ; \al _L , \be _L , L) d\ka = {I( \ep ; \al _L , \be _L , L) \over I(0; \al _L , \be _L , L)} \text{,} \non 
\end{align}
where 
\begin{align}
I( \om ; \al , \be , L) &= \int_{\om }^{1 - \om } \pi _0 ( \ka ; \al , L) \pi _0 (1 -  \ka ; \be , L) d\ka \non 
\end{align}
for $\om \in [0, 1]$ and $\al , \be > 0$. 
Now suppose $0 < \ep < 1 / 2$ and let 
$U$ be a uniform random variable on the interval $( \ep , 1 - \ep )$. 
Then it follows from the covariance inequality that 
\begin{align}
{I( \ep ; \al _L , \be _L , L) \over 1 - 2 \ep } &= E[ \pi _0 (U; \al _L , L) \pi _0 (1 - U; \be _L , L) ] \non \\
&= E \Big[ {\al _L \over 1 - U} \Big\{ \prod_{k = 1}^{L - 1} {1 \over f_k (1 / U)} \Big\} {1 \over \{ f_L (1 / U) \} ^{1 + \al _L }} \non \\
&\quad \times {\be _L \over U} \Big\{ \prod_{k = 1}^{L - 1} {1 \over f_k (1 / (1 - U))} \Big\} {1 \over \{ f_L (1 / (1 - U)) \} ^{1 + \be _L }} \Big] \non \\
&\le E \Big[ {\al _L \over 1 - U} \Big\{ \prod_{k = 1}^{L - 1} {1 \over f_k (1 / U)} \Big\} {1 \over \{ f_L (1 / U) \} ^{1 + \al _L }} \Big] \non \\
&\quad \times E \Big[ {\be _L \over U} \Big\{ \prod_{k = 1}^{L - 1} {1 \over f_k (1 / (1 - U))} \Big\} {1 \over \{ f_L (1 / (1 - U)) \} ^{1 + \be _L }} \Big] \non \\
&= \widetilde{I} ( \ep ; \al _L , L) \widetilde{I} ( \ep ; \be _L , L) \text{,} \non 
\end{align}
where 
\begin{align}
\widetilde{I} ( \ep ; \ga , L) &= E \Big[ {\ga \over 1 - U} \Big\{ \prod_{k = 1}^{L - 1} {1 \over f_k (1 / U)} \Big\} {1 \over \{ f_K (1 / U) \} ^{1 + \ga }} \Big] \non \\
&= E \Big[ {\ga \over U} \Big\{ \prod_{k = 1}^{L - 1} {1 \over f_k (1 / (1 - U))} \Big\} {1 \over \{ f_L (1 / (1 - U)) \} ^{1 + \ga }} \Big] \non 
\end{align}
for $\ga > 0$. 
Furthermore, 
\begin{align}
(1 - 2 \ep ) \widetilde{I} ( \ep ; \ga , L) &= \int_{\ep }^{1 - \ep } {\ka \over 1 - \ka } \pi _0 ( \ka ; \ga , L) d\ka \le {1 - \ep \over \ep } \{ F_0 (1 - \ep ; \ga , L) - F_0 ( \ep ; \ga , L) \} \non 
\end{align}
for all $\ga > 0$. 
On the other hand, letting 
\begin{align}
h( \ka ; \ga , L) &= \ga \Big\{ \prod_{k = 1}^{L - 1} {1 \over f_k (1 / \ka )} \Big\} {1 \over \{ f_L (1 / \ka ) \} ^{1 + \ga }} \non 
\end{align}
for $\ka \in (0, 1)$ and $\ga > 0$. Noting that $h$ is increasing in $\ka$, we obtain 
\begin{align}
I(0; \al _L , \be _L , L) &\ge \int_{0}^{\ep } \pi _0 ( \ka ; \al _L , L) \be _L \Big\{ \prod_{k = 1}^{L - 1} {1 \over f_k (1 / (1 - \ep ))} \Big\} {1 \over \{ f_L (1 / (1 - \ep )) \} ^{1 + \be _L }} d\ka \non \\
&= F_0 ( \ep ; \al _L , L) h(1 - \ep ; \be _L , L) = {F_0 ( \ep ; \al _L , L) \over 1 - 2 \ep } \int_{\ep }^{1 - \ep } h(1 - \ep ; \be _L , L) d\ka \non \\
&\ge {F_0 ( \ep ; \al _L , L) \over 1 - 2 \ep } \int_{\ep }^{1 - \ep } {\ep \over \ka } h( \ka ; \be _L , L) d\ka = {F_0 ( \ep ; \al _L , L) \over 1 - 2 \ep } \int_{\ep }^{1 - \ep } \ep \pi _0 ( \ka ; \be _L , L) d\ka \non \\
&= {\ep \over 1 - 2 \ep } F_0 ( \ep ; \al _L , L) \{ F_0 (1 - \ep ; \be _L , L) - F_0 ( \ep ; \be _L , L) \} \text{.} \non 
\end{align}
Thus, we conclude by (\ref{pdouble-ILp1}) that 
\begin{align}
&F(1 - \ep ; \al _L , \be _L , L) - F( \ep ; \al _L , \be _L , L) \non \\
&\le {1 \over 1 - 2 \ep } \Big( {1 - \ep \over \ep } \Big) ^2 {1 - 2 \ep \over \ep } \frac{ \{ F_0 (1 - \ep ; \al _L , L) - F_0 ( \ep ; \al _L , L) \} \{ F_0 (1 - \ep ; \be _L , L) - F_0 ( \ep ; \be _L , L) \} }{ F_0 ( \ep ; \al _L , L) \{ F_0 (1 - \ep ; \be _L , L) - F_0 ( \ep ; \be _L , L) \} } \non \\
&= {(1 - \ep )^2 \over {\ep }^3} \frac{ F_0 (1 - \ep ; \al _L , L) - F_0 ( \ep ; \al _L , L) }{ F_0 ( \ep ; \al _L , L) } \to 0 \non 
\end{align}
as $L \to \infty $. 

For part 4, note that the unnormalized density of $u = (1 - \ka ) / \ka $ based on $\pi _0 ( \ka ; \al , L) \pi _0 (1 - \ka ; \be , L)$ is 
\begin{align}
{\al \be \over u} \Big\{ \prod_{k = 1}^{L - 1} {1 \over f_k (1 + u)} \Big\} {1 \over \{ f_L (1 + u) \} ^{1 + \al }} \Big\{ \prod_{k = 1}^{L - 1} {1 \over f_k (1 + 1 / u)} \Big\} {1 \over \{ f_L (1 + 1 / u) \} ^{1 + \be }} \text{.} \non 
\end{align}
Then, apply the integral representation in (\ref{eq:integration_GS_gamma}) to the two products of functions of $1+u$ and $1+1/u$ 
with $c_L=0$, $c_{L-1}=\cdots = c_1 = 1$ and $c_0\in \{ 1 + \al , 1 + \be \}$ to obtain the desired result. 

\end{proof}

\end{document}